\documentclass[journal]{IEEEtran}
\usepackage{amsmath,amsfonts}
\usepackage{algorithmic}
\usepackage{algorithm}
\usepackage{array}
\usepackage[caption=false,font=normalsize,labelfont=sf,textfont=sf]{subfig}
\usepackage{textcomp}
\usepackage{stfloats}
\usepackage{url}
\usepackage{verbatim}
\usepackage{graphicx}
\usepackage{cite}
\usepackage{bm}
\usepackage{amssymb} 
\usepackage{mdwmath}
\usepackage{mdwtab}
\usepackage{eqparbox}
\usepackage{cuted}
\usepackage[hidelinks]{hyperref}

\newtheorem{proposition}{\bfseries Proposition}

\newtheorem{remark}{\bfseries Remark}
\newtheorem{corollary}{\bfseries Corollary}

\begin{document}

\title{Beyond Beamforming: Phase-and-Gain Channel Shaping via Rotatable Antenna Arrays}

\author{Xingxiang Peng, Qingqing Wu, Ziyuan Zheng, Wen Chen, and Guangchi Zhang 
\thanks{
    X. Peng, Q. Wu, Z. Zheng, and W. Chen are with the School of Integrated Circuits, Shanghai Jiao Tong University, 200240, China 
    (e-mail: 
    \href{mailto:peng_xingxiang@sjtu.edu.cn}{\nolinkurl{peng_xingxiang@sjtu.edu.cn}};
    \href{mailto:qingqingwu@sjtu.edu.cn}{\nolinkurl{qingqingwu@sjtu.edu.cn}};
    \href{mailto:zhengziyuan2024@sjtu.edu.cn}{\nolinkurl{zhengziyuan2024@sjtu.edu.cn}};
    \href{mailto:wenchen@sjtu.edu.cn}{\nolinkurl{wenchen@sjtu.edu.cn}}).
    G. Zhang is with the School of Information Engineering, Guangdong University of Technology, 510006, China 
    (e-mail:
    \href{mailto:gczhang@gdut.edu.cn}{\nolinkurl{gczhang@gdut.edu.cn}}).
    }
} 

\IEEEaftertitletext{\vspace{-2\baselineskip}}

\maketitle

\begin{abstract}
    This paper investigates geometry-reconfigurable transmission for multiuser communication systems enabled by a rotatable antenna array. In contrast to conventional fixed arrays, the proposed architecture jointly exploits array pose adjustment and element-level boresight steering, thereby reshaping both the array-induced phase responses and the direction-dependent channel gains. We formulate a weighted sum-rate maximization problem that jointly optimizes the transmit beamformers, array pose, and element boresights under practical visibility and steering constraints. To reveal the underlying design principles, we first provide a geometric interpretation via zero-forcing analysis, showing that the resulting rates stem from both channel-strength enhancement and spatial-separability improvement. Specifically, array-pose rotation improves inter-user channel orthogonality even with isotropic elements, whereas directional elements introduce a tradeoff between phase-based spatial separation and boresight-dependent gain alignment. Motivated by these insights, we develop an efficient optimization framework that jointly coordinates transmit beamforming, array-pose adaptation, and element-boresight steering to exploit the geometry-induced phase-and-gain channel-shaping capability. Simulation results demonstrate that the proposed joint design outperforms fixed-array, pose-only, and boresight-only benchmarks, with larger gains achieved under more directive element patterns and tighter boresight-steering constraints.
\end{abstract}

\begin{IEEEkeywords}
    Array-pose optimization, element-boresight steering, phase-and-gain channel shaping, rotatable antenna array, weighted sum-rate maximization.
\end{IEEEkeywords}

\section{Introduction}

\IEEEPARstart{F}{uture} sixth-generation (6G) wireless networks are expected to support emerging applications such as immersive extended reality, digital twins, and integrated sensing and communication, which require highly efficient use of spatial, spectral, and energy resources \cite{10555049}. Multi-antenna techniques, including massive multiple-input multiple-output (MIMO), hybrid beamforming, and cell-free architectures, have been widely investigated to improve spectral efficiency in multiuser wireless systems \cite{10379539,10496996,11053128}. However, these techniques mainly optimize signal-domain variables over channels determined by fixed array geometry and propagation environments. Consequently, their performance is still limited by unfavorable channel conditions, such as strong inter-user correlation, limited spatial rank, and directional mismatch.

To move beyond signal adaptation over fixed channels, recent studies have explored physical-domain reconfigurability to actively reshape wireless propagation. Intelligent reflecting surfaces (IRSs) represent a typical environment-side solution, where a large number of nearly passive reflecting elements are jointly tuned to create favorable auxiliary propagation paths, enhance coverage, and suppress interference \cite{8910627,9998527,10795216,11007277}. In parallel with environment-side reconfigurable technologies, movable antennas (MAs) and fluid antennas (FAs) have recently emerged as representative transceiver-side approaches for exploiting spatial-domain reconfigurability \cite{10753482,10906511}. MAs improve wireless links by relocating antenna elements within a prescribed region \cite{10286328,10243545,10354003}, whereas FAs select or activate favorable ports over a reconfigurable aperture \cite{11049889,10794752,11039166}, thereby exploiting spatial channel variations, enlarging the effective aperture, and reducing channel correlation. However, most existing MA/FA investigations adopt isotropic element models, whereas practical base-station (BS) arrays typically employ downtilted directional elements, whose higher gain and spatial focusing capability deserve explicit consideration in transceiver-side physical reconfigurability.

A natural way to exploit directional radiation is to enable antenna-orientation reconfigurability at the transceiver \cite{11222668,11427014,11489290}. In this regard, rotatable antennas (RAs) provide a geometry-domain mechanism for shaping the effective channel by coupling controllable orientations with direction-dependent element patterns. For planar arrays, such orientation reconfigurability can act at two levels. At the array level, rotating the uniform planar array (UPA) changes the whole aperture orientation, thereby modifying both the element positions and the local viewing directions. Hence, it reshapes the array-induced phase progression, enhances inter-user spatial separability, and provides coarse directional coverage \cite{11520277,11534579,11206404}. At the element level, boresight steering adjusts the radiation direction of each antenna element, which refines the direction-dependent gains toward the resulting local user directions \cite{11134688,11520842,arxiv1}. These two levels are inherently coupled rather than independently additive: a pose that is favorable for spatial separability may not provide sufficient directional-gain alignment, whereas boresight steering under a fixed pose cannot modify the aperture-induced phase geometry. Therefore, a joint treatment of array pose and element boresights is needed to understand and exploit the phase-and-gain channel-shaping capability of directional arrays under practical steering constraints.

Motivated by the above discussion, this paper investigates a geometry-reconfigurable transmission architecture for multiuser communication systems enabled by a rotatable UPA with steerable element boresights. Instead of treating array pose and element boresights as separate design knobs, we optimize them as two coupled geometry-domain DoFs together with transmit beamforming. The UPA pose reshapes the array-induced phase geometry and provides coarse directional coverage, while the element boresights refine direction-dependent gains under the resulting local user directions. This joint phase-and-gain channel-shaping perspective leads to a challenging signal--geometry coupled WSR maximization problem and motivates the analysis and algorithm developed in this paper. Overall, the main contributions of this paper are summarized as follows.

\begin{itemize}
    \item We formulate a weighted sum-rate (WSR) maximization problem for rotatable-UPA-enabled multiuser downlink transmission, where the transmit beamformers, UPA pose, and element boresights are jointly optimized. The formulation captures both aperture-induced phase responses and direction-dependent radiation gains, leading to a highly nonconvex signal-geometry coupled design problem.

    \item We first provide a zero-forcing (ZF) interpretation to reveal the physical roles of UPA pose rotation and element boresight steering. By decomposing the ZF rate into channel-strength and spatial-separability terms, we show that UPA rotation improves spatial separability by reshaping array-induced phase differences even with isotropic elements. For directional elements, the ZF interpretation further characterizes a fundamental tradeoff between phase-based spatial separability and boresight-dependent directional-gain alignment, thereby necessitating the joint UPA-pose and element-boresight design.

    \item Motivated by the above geometric insights, we develop an efficient optimization framework for the general multiuser problem. The beamforming block is handled by the weighted minimum mean-square error (WMMSE) method, while the UPA pose is updated using analytical WSR gradients, a trust-region ascent direction, and feasible Armijo search. The element boresights are further refined via tangent-space gradients and a Frank--Wolfe (FW)-type update with a closed-form spherical-cap oracle. These block-wise updates preserve feasibility and yield a nondecreasing WSR sequence.

    \item Simulation results validate both the geometric interpretation and the effectiveness of the proposed algorithm. The joint design consistently outperforms fixed-UPA, pose-only, boresight-only, and isotropic benchmarks, showing that antenna directivity is beneficial only when properly matched to the user geometry and that UPA pose and element boresights enable complementary global adjustment of array-induced phase responses and local refinement of direction-dependent gains.
\end{itemize}

The remainder of this paper is organized as follows. Section II introduces the system model and formulates the WSR maximization problem. Section III provides a ZF-based geometric interpretation of pose--boresight channel shaping. Section IV develops a joint optimization framework for beamforming and array geometry. Section V presents simulation results, and Section VI concludes this paper.

\section{System Model and Problem Formulation}
\label{sysmodel}

\begin{figure}[t]
	\begin{center}
		\includegraphics[width=0.48\textwidth]{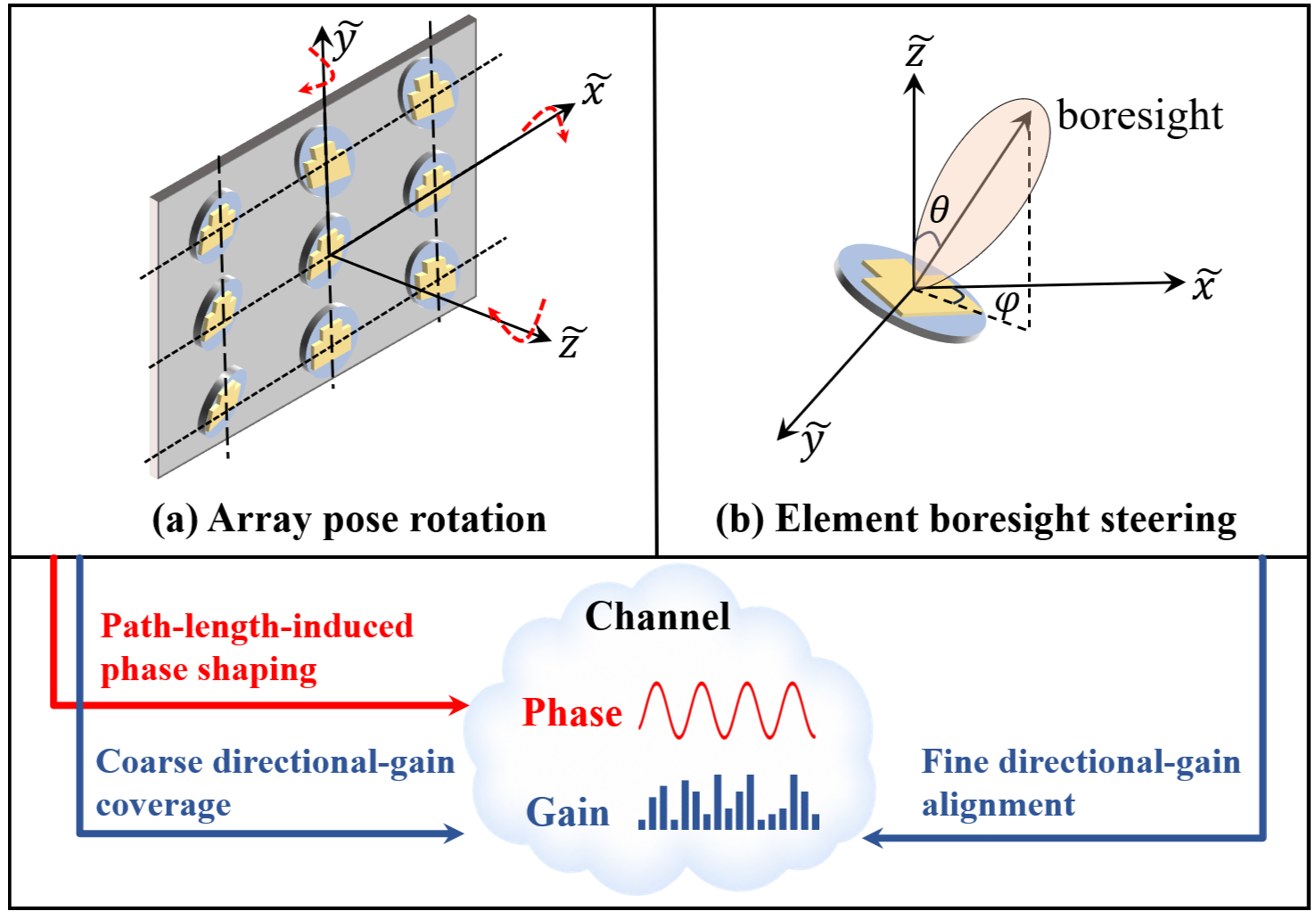}
		\caption{Illustration of the proposed rotatable UPA for phase-and-gain channel shaping via (a) Array pose rotation and (b) Element boresight steering.}
		\label{fig:geometry_dofs}
	\end{center}
\end{figure}

We consider a multiuser downlink system where a BS equipped with a rotatable UPA serves $K$ single-antenna users with isotropic reception. As illustrated in Fig.~\ref{fig:geometry_dofs}, our proposed UPA provides two geometry-domain DoFs: UPA pose rotation and element boresight steering. The positions of the UPA center and user $k$ in the global coordinate system (GCS) are denoted by $\bm p_0\in\mathbb{R}^3$ and $\bm q_k\in\mathbb{R}^3$, respectively. The UPA consists of $M=M_x\times M_y$ elements with inter-element spacing $d$. The element index $m\in\{1,\dots,M\}$ is mapped to the two-dimensional index $(m_x,m_y)$ in row-major order as
\begin{align}
    m = m_x + (m_y-1)M_x,
\end{align}
where $m_x\in\{1,\ldots,M_x\}$ and $m_y\in\{1,\ldots,M_y\}$. The position of element $m$ in the UPA local coordinate system (LCS), with the UPA center as the origin, is given by
\begin{align}
    \tilde{\bm p}_{m}
    =
    \left[
    \left(m_x-\frac{M_x+1}{2}\right)d,\,
    \left(m_y-\frac{M_y+1}{2}\right)d,\,
    0
    \right]^T.
    \label{eq:tx-upa-coord_MU}
\end{align}

We next describe the two geometric DoFs of the rotatable UPA, then present the corresponding channel model and problem formulation.

\subsection{Array Pose Rotation}
The UPA pose is characterized by a rotation matrix $\bm R\in\mathrm{SO}(3)$, where $\mathrm{SO}(3)$ denotes the three-dimensional special orthogonal group. The matrix $\bm R$ maps vectors from the UPA LCS to the GCS, and its columns correspond to the global directions of the UPA local axes. For illustration, one possible Euler-angle parameterization of $\bm R$ is given by
\begin{align}
   \bm R=\bm R_x(\alpha_x)\,\bm R_y(\alpha_y)\,\bm R_z(\alpha_z),
\end{align}
under the active-rotation convention for column vectors, where the rightmost rotation acts first. Here, $\bm R_x(\cdot)$, $\bm R_y(\cdot)$, and $\bm R_z(\cdot)$ denote the standard active rotation matrices about the global $x$-, $y$-, and $z$-axes, respectively. The Euler-angle expression above is used only to illustrate the physical meaning of the rotation matrix. In the sequel, the UPA pose is represented directly by $\bm R\in\mathrm{SO}(3)$. Accordingly, the global position of element $m$ is given by
\begin{align}
    \bm p_m
    =
    \bm p_0+\bm R\tilde{\bm p}_m .
    \label{eq:Tx-L2G_MU}
\end{align}
To ensure that all users remain in the visible front half-space of the UPA, we impose
\begin{align}
    (\bm R\bm e_z)^T(\bm q_k-\bm p_0)\ge0,\quad \forall k,
    \label{eq:UPA_front_constraint}
\end{align}
where $\bm e_z=[0,0,1]^T$ is the local $z$-axis and $\bm R\bm e_z$ is the global normal direction of the UPA plane. This constraint prevents the UPA from pointing away from any served user and is consistent with the front-side radiation model adopted below.

\subsection{Element Boresight Steering}
Each antenna element independently steers its boresight in the UPA LCS. The local boresight of element $m$ is denoted by a unit vector $\tilde{\bm f}_m$ and parameterized by the zenith--azimuth angles $(\theta_m,\phi_m)$ as
\begin{align}
    \tilde{\bm f}_{m}
    =
    \left[
    \sin\theta_m\cos\phi_m,\,
    \sin\theta_m\sin\phi_m,\,
    \cos\theta_m
    \right]^T,
    \label{eq:local-boresight_MU}
\end{align}
where $\theta_m$ is measured from the local $z$-axis and $\phi_m$ is measured in the local $x$-$y$ plane. To account for practical steering limitations, the local boresight is constrained within the following spherical-cap feasible set:
\begin{align}
    \tilde{\bm f}_m\in\mathcal F
    \triangleq
    \left\{
    \tilde{\bm f}\in\mathbb R^3
    ~\middle|~
    \|\tilde{\bm f}\|_2=1,\,
    \tilde{\bm f}^T\bm e_z\ge\cos\theta_{\max}
    \right\}.
    \label{eq:ori_feasible_set}
\end{align}
The unit-norm constraint ensures that $\tilde{\bm f}_m$ represents a valid boresight direction, while the spherical-cap constraint restricts its maximum deviation from the local array normal $\bm e_z$ to $\theta_{\max}$. Given the UPA pose $\bm R$, the corresponding global boresight direction of element $m$ is
\begin{align}
    \bm f_m
    =
    \bm R\tilde{\bm f}_m .
    \label{eq:tx-global-boresight_MU}
\end{align}
This global boresight direction determines the direction-dependent radiation gain in the channel model.

\subsection{Channel Model}
The directional power gain of each antenna element is modeled by a front-side cosine pattern. For a unit boresight direction $\bm f$ and a unit departure direction $\bm d$, the power pattern is given by \cite{Ant2016,11489290}
\begin{align}
    G(\bm f,\bm d)
    =
    \kappa_{\max}
    \big[\bm f^T\bm d\big]_+^{2p},
    \label{eq:RA_pattern_MU}
\end{align}
where $[x]_+\triangleq\max\{x,0\}$, $\bm f^T\bm d$ is the cosine of the angular offset between the boresight and departure directions, and $\kappa_{\max}=2(2p+1)$ is chosen for power normalization. The positive-part operation restricts radiation to the visible front half-space, while $p\ge0$ controls the main-lobe width, with a larger $p$ corresponding to a narrower pattern. In particular, for $p=0$, we adopt the convention that the gain equals $\kappa_{\max}=2$ over the visible front half-space and is zero over the back half-space, corresponding to a front-side isotropic pattern.

For the direct link from transmit element $m$ to user $k$, let $r_{m,k}=\|\bm q_k-\bm p_m\|$ denote the propagation distance. The corresponding unit departure direction from element $m$ to user $k$ is given by
\begin{align}
    \bm d_{m,k}
    =
    \frac{\bm q_k-\bm p_m}{r_{m,k}},
    \label{eq:los_direction}
\end{align}
which determines the directional gain of element $m$ toward user $k$. The line-of-sight (LoS) channel coefficient is
\begin{align}
    h_{k,m}^{\mathrm{LoS}}
    =
    \frac{\sqrt{\beta_0 G(\bm f_m,\bm d_{m,k})}}{r_{m,k}}
    \exp\!\left(
    -j\frac{2\pi}{\lambda}r_{m,k}
    \right),
    \label{eq:h_mk_los}
\end{align}
where $\lambda$ denotes the carrier wavelength and $\beta_0=(4\pi/\lambda)^{-2}$ denotes the free-space path-power gain at the reference distance of $1$ m according to the Friis transmission formula.

We further consider $Q$ scattering clusters with locations $\{\bm s_q\}_{q=1}^{Q}$ and effective radar cross-sections $\{\zeta_q\}_{q=1}^{Q}$ to model the non-line-of-sight (NLoS) propagation. For the BS-to-cluster segment, let $r_{m,q}=\|\bm s_q-\bm p_m\|$ denote the propagation distance. The unit departure direction from element $m$ to cluster $q$ is expressed as
\begin{align}
    \bm d_{m,q} =
    \frac{\bm s_q-\bm p_m}{r_{m,q}},
    \label{eq:nlos_bs_cluster_direction}
\end{align}
which determines the directional gain of element $m$ toward the scatterer $q$. For the cluster-to-user segment, let $r_{q,k}=\|\bm q_k-\bm s_q\|$ denote the propagation distance. The NLoS component from element $m$ to user $k$ via cluster $q$ is modeled as
\begin{align}
    h_{k,m}^{(q)}
    &=
    \frac{\sqrt{\beta_0 G(\bm f_m,\bm d_{m,q})}}{r_{m,q}} \sqrt{\frac{\zeta_q}{4\pi r_{q,k}^{2}}} \notag
    \\
    &\qquad\,\,
    \exp\!\left(
    -j\frac{2\pi}{\lambda}(r_{m,q}+r_{q,k})
    +j\chi_{q,k}
    \right),
    \label{eq:h_mk_nlos_q}
\end{align}
where $\sqrt{\zeta_q/(4\pi r_{q,k}^{2})}$ accounts for the scattering attenuation from cluster $q$ to user $k$, and $\chi_{q,k}\sim\mathcal U[0,2\pi)$ denotes the random scattering phase. Summing over all clusters yields
\begin{align}
    h_{k,m}^{\mathrm{NLoS}}
    =
    \sum_{q=1}^{Q} h_{k,m}^{(q)} .
    \label{eq:h_mk_nlos}
\end{align}

Combining both LoS and NLoS components, the channel coefficient from element $m$ to user $k$ is given by
\begin{align}
    h_{k,m}
    =
    h_{k,m}^{\mathrm{LoS}}
    +
    h_{k,m}^{\mathrm{NLoS}} .
    \label{eq:h_mk_overall}
\end{align}
To match the input-output convention used below, the BS-to-user-$k$ channel vector is defined as
\begin{align}
    \bm h_k
    =
    [h_{k,1},h_{k,2},\ldots,h_{k,M}]^H .
    \label{eq:user_channel_vector}
\end{align}

\begin{remark}
    \label{rem:channel_geometric_roles}
    The channel model highlights the complementary roles of the two geometric DoFs. The UPA pose $\bm R$ changes the element positions and maps the local boresights to global directions, thereby shaping both propagation phases and directional channel gains. For a fixed UPA pose, the element boresights adjust the direction-dependent gains without altering the element positions or propagation phases. Hence, the two geometric DoFs enable coupled phase-and-gain channel shaping. When $p=0$, the gain becomes angle-independent for front-side active paths; thus, boresight steering provides no continuous gain-shaping freedom for such paths, whereas UPA pose rotation can still reshape the relative phases through the pose-dependent element positions.
\end{remark}

\subsection{Problem Formulation}
For user $k$, let $\bm w_k\in\mathbb C^{M}$ denote its beamforming vector and $s_k\sim\mathcal{CN}(0,1)$ denote its information symbol. The received signal at user $k$ is given by
\begin{align}
    y_k
    =
    \bm h_k^H\bm w_k s_k
    +
    \sum_{i\ne k}\bm h_k^H\bm w_i s_i
    +
    n_k,
    \label{eq:rx_signal}
\end{align}
where $n_k\sim\mathcal{CN}(0,\sigma^2)$ denotes the additive white Gaussian noise. The resulting achievable rate for each user is given by
\begin{align}
    r_k
    =
    \log_2\!\left(
    1+
    \frac{|\bm h_k^{H}\bm w_k|^2}
    {\sum_{i\ne k}|\bm h_k^{H}\bm w_i|^2+\sigma^2}
    \right).
    \label{eq:rate_k}
\end{align}
Let $\{\omega_k\ge 0\}$ denote the user weights. We jointly optimize the transmit beamformers, the UPA pose, and the element boresights to maximize the system WSR:
\begin{align}
    \max_{\{\bm w_k\},\,\bm R,\,\{\tilde{\bm f}_m\}}
    \quad
    & \sum_{k=1}^K \omega_k r_k
    \label{prob:WSR_MU}\\
    \mathrm{s.t.}\qquad\,\,
    & \sum_{k=1}^K\|\bm w_k\|_2^2\le P_{\max},
    \label{const:power_MU}\\
    & \bm R\in\mathrm{SO}(3),
    \label{const:SO3_MU}\\
    & (\bm R\bm e_z)^T(\bm q_k-\bm p_0)\ge0,\quad \forall k,
    \label{const:front_MU}\\
    & \tilde{\bm f}_m\in\mathcal F,\quad \forall m.
    \label{const:cap_MU}
\end{align}
Here, \eqref{const:power_MU} ensures that the transmit power does not exceed the maximum budget $P_{\max}$, \eqref{const:SO3_MU} restricts the UPA pose to a valid three-dimensional rotation, \eqref{const:front_MU} ensures that all users lie in the visible front half-space of the UPA, and \eqref{const:cap_MU} restricts the feasible steering range of each element boresight.

\section{Geometric Insights into Phase-and-Gain Channel Shaping}
\label{sec:insight_pose_orientation}

Before developing the general multiuser solution, this section uses a canonical far-field LoS setting to reveal how UPA pose rotation and element-boresight steering shape the effective channels under ZF transmission. We first decompose the ZF rate into channel-strength and spatial-separability terms. Based on this decomposition, we show that for isotropic elements, the boresight DoF degenerates for direction-dependent gain shaping, while UPA pose rotation remains effective by reshaping array-induced phase differences to reduce inter-user channel correlation. In contrast, for directional elements, the UPA pose and element boresights become coupled: pose rotation shapes array-phase-based spatial separability and provides coarse directional-gain alignment, whereas boresight steering further refines element-level directional-gain alignment. This coupling leads to a geometric tradeoff that motivates joint pose-and-boresight design.

\subsection{Rate Decomposition: Channel Strength and Separability}
\label{subsec:two_user_setting}

Consider a two-user far-field LoS setting. Let $\ell_k=\|\bm q_k-\bm p_0\|_2$ and $\bm d_k=(\bm q_k-\bm p_0)/\ell_k$ denote the distance and the departure direction from the UPA center to user $k$, respectively. For a given UPA pose $\bm R$, the corresponding local direction is $\tilde{\bm d}_k=\bm R^T\bm d_k$. Under the far-field approximation with respect to the array aperture, we have $r_{m,k}\approx \ell_k-\tilde{\bm p}_m^T\tilde{\bm d}_k$ and $\bm d_{m,k}\approx\bm d_k$. Hence, by defining $G_{m,k}\triangleq G(\bm R\tilde{\bm f}_m,\bm d_k)=\kappa_{\max}[\tilde{\bm f}_m^T\tilde{\bm d}_k]_+^{2p}$, the LoS channel coefficient in \eqref{eq:h_mk_los} can be approximated as
\begin{align}
    h_{k,m}
    \approx
    \frac{\sqrt{\beta_0 G_{m,k}}}{\ell_k}
    \exp\!\Big(-j\frac{2\pi}{\lambda}\ell_k\Big)
    \exp\!\Big(j\frac{2\pi}{\lambda}
    \tilde{\bm p}_m^T\tilde{\bm d}_k\Big).
    \label{eq:two_user_los_channel}
\end{align}
This far-field form separates the LoS channel into a directional-gain term $G_{m,k}$ and an array-induced phase term, which will be linked to channel strength and spatial separability in the ZF decomposition below.

Following the column-channel convention in the system model, we collect the conjugated scalar coefficients into $\bm h_k$, i.e., $[\bm h_k]_m=h_{k,m}^{*}$. Define the two-user channel matrix $\bm H=[\bm h_1,\bm h_2]^H\in\mathbb C^{2\times M}$, the channel strength $A_k=\|\bm h_k\|_2^2$, and the normalized channel correlation coefficient
\begin{align}
    \rho
    =
    \frac{\bm h_1^H\bm h_2}
    {\|\bm h_1\|_2\|\bm h_2\|_2}.
    \label{eq:channel_corr_def_revised}
\end{align}
Then the two-user Gram matrix is given by
\begin{align}
    \bm H\bm H^H
    =
    \begin{bmatrix}
        A_1 & \sqrt{A_1A_2}\rho\\
        \sqrt{A_1A_2}\rho^* & A_2
    \end{bmatrix}.
    \label{eq:two_user_gram_revised}
\end{align}
When $\bm H$ has full row rank, the unnormalized ZF beam directions are given by
\begin{align}
    \widetilde{\bm W}
    =
    \bm H^H(\bm H\bm H^H)^{-1}
    =
    [\tilde{\bm w}_1,\tilde{\bm w}_2],
    \label{eq:zf_pinv_revised}
\end{align}
and the normalized ZF beamformer with power allocation $\{P_k\}_{k=1}^{2}$ is $\bm w_k=\sqrt{P_k}\tilde{\bm w}_k/\|\tilde{\bm w}_k\|_2$, where $P_k\ge0$ and $\sum_{k=1}^{2}P_k\le P_{\max}$. Since $\|\tilde{\bm w}_k\|_2^2=[(\bm H\bm H^H)^{-1}]_{k,k}$, inverting \eqref{eq:two_user_gram_revised} yields
\begin{align}
    \|\tilde{\bm w}_k\|_2^2
    =
    \frac{1}{A_k(1-|\rho|^2)},
    \qquad k=1,2.
    \label{eq:zf_norm_decomposition_revised}
\end{align}
Using $\bm h_k^H \tilde{\bm w}_k=1$ and $\bm h_k^H \tilde{\bm w}_j=0$ for $j\ne k$, the SINR of user $k$ with power $P_k$ is $\text{SINR}_k = \frac{P_k}{\sigma^2 \|\tilde{\bm w}_k\|_2^2}$. Substituting \eqref{eq:zf_norm_decomposition_revised} gives the two-user ZF WSR decomposition:
\begin{align}
    R_{\rm ZF}
    =
    \sum_{k=1}^{2}
    \omega_k
    \log_2\!\Big(
    1+
    \frac{P_k}{\sigma^2}
    A_k(1-|\rho|^2)
    \Big).
    \label{eq:zf_wsr_decomposed_revised}
\end{align}
Equation~\eqref{eq:zf_wsr_decomposed_revised} shows that the ZF WSR is governed by the allocated powers $\{P_k\}$, the channel strengths $\{A_k\}$, and the spatial-separability factor $1-|\rho|^2$.

\begin{remark}
    The magnitude $|\rho|$ quantifies the degree of alignment between the two user channels. A smaller $|\rho|$ leads to a larger separability factor $1-|\rho|^2$, indicating that the ZF beamformer incurs less projection loss when suppressing inter-user interference. In the extreme case of $\rho=0$, the two channels are orthogonal, suggesting that ZF introduces no projection loss and thereby reduces to maximum-ratio transmission (MRT) in terms of beam direction.
\end{remark}

\subsection{Isotropic-Element Case: Pose-Induced Orthogonalization}
\label{subsec:p0_pose_orthogonalization}

We first consider the isotropic-element case with $p=0$. For front-side active paths, the directional gain becomes constant, i.e., $G_{m,k}=\kappa_{\max}$. Under the far-field model in \eqref{eq:two_user_los_channel}, the channel strength of user $k$ is given by
\begin{align}
    A_k=\|\bm h_k\|_2^2
    \approx
    \frac{\beta_0\kappa_{\max}M}{\ell_k^2},
    \qquad k=1,2,
    \label{eq:p0_channel_strength_revised}
\end{align}
and the normalized channel correlation in \eqref{eq:channel_corr_def_revised} reduces to
\begin{align}
    \rho(\bm R)
    =
    \frac{\nu}{M}
    \sum_{m=1}^{M}
    \exp\!\left(
    -j\frac{2\pi}{\lambda}
    \tilde{\bm p}_m^T
    \bm R^T(\bm d_2-\bm d_1)
    \right),
    \label{eq:p0_phase_corr_revised}
\end{align}
where $\nu\triangleq \exp\left(j2\pi(\ell_2-\ell_1)/\lambda\right)$ is a unit-modulus constant independent of the element index $m$.

Equations~\eqref{eq:p0_channel_strength_revised} and \eqref{eq:p0_phase_corr_revised} show that, in the isotropic-element case, the channel-strength term $A_k$ is independent of both the UPA pose and the element boresights, while the normalized channel correlation is independent of the element boresights but remains controllable by the UPA pose through the projected direction difference $\bm R^T(\bm d_2-\bm d_1)$. Therefore, UPA pose rotation can reshape the array-induced phase differences to reduce $|\rho(\bm R)|$, thereby increasing the spatial separability factor $1-|\rho(\bm R)|^2$. The following proposition gives a sufficient angular-separation condition under which this pose-dependent correlation can be driven to zero.

\begin{proposition}[Sufficient Condition for Pose-Induced Orthogonalization]
\label{prop:pose_phase_orthogonalization}
Consider a two-user far-field LoS channel with an $M_x\times M_y$ UPA composed of isotropic elements, where $M_xM_y\ge2$. Let $\alpha=\arccos(\bm d_1^T\bm d_2)$ denote the angular separation between the two users, and let $d$ and $\lambda$ be the inter-element spacing and carrier wavelength, respectively. If
\begin{align}
    2\sin\frac{\alpha}{2}
    \ge
    \min\left\{
    \frac{\lambda}{M_xd},
    \frac{\lambda}{M_yd}
    \right\},
    \label{eq:p0_pose_orth_condition_revised}
\end{align}
then there exists a UPA pose $\bm R$ that orthogonalizes the two user LoS channels, i.e., $\rho(\bm R)=0$. 
\end{proposition}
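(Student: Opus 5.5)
Let $\bm u=\bm d_2-\bm d_1$. Since $\bm d_1,\bm d_2$ are unit vectors, $\|\bm u\|_2=\sqrt{2-2\bm d_1^T\bm d_2}=2\sin(\alpha/2)$. The local vector $\tilde{\bm u}=\bm R^T\bm u$ has the same norm. Moreover, because $\mathrm{SO}(3)$ acts transitively on each sphere, $\bm R$ can place $\tilde{\bm u}$ along any prescribed direction of norm $2\sin(\alpha/2)$.

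Substituting \eqref{eq:tx-upa-coord_MU} into \eqref{eq:p0_phase_corr_revised}, only the first two components $(\tilde u_x,\tilde u_y)$ of $\tilde{\bm u}$ enter the sum, and it factorizes into two uniform-linear-array sums:
\begin{align}
|\rho(\bm R)|=\left|\frac{1}{M_x}\sum_{m_x=1}^{M_x}e^{-j\frac{2\pi}{\lambda}m_x d\tilde u_x}\right|\left|\frac{1}{M_y}\sum_{m_y=1}^{M_y}e^{-j\frac{2\pi}{\lambda}m_y d\tilde u_y}\right|.
\end{align}
Each factor is a Dirichlet kernel. The $x$-factor equals $|\sin(\pi M_x d\tilde u_x/\lambda)|/(M_x|\sin(\pi d\tilde u_x/\lambda)|)$, which vanishes when $\tilde u_x=\lambda/(M_xd)$, provided $M_x\ge2$ and $d\tilde u_x/\lambda\notin\mathbb Z$. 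The $y$-factor behaves the same way. So it suffices to make one factor zero.

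Suppose without loss of generality that the minimum in \eqref{eq:p0_pose_orth_condition_revised} is $\lambda/(M_xd)$, i.e., $M_x\ge M_y$. If $M_x=1$, then $M_y=1$, contradicting $M_xM_y\ge2$; hence $M_x\ge2$. The condition gives $\|\bm u\|_2\ge\lambda/(M_xd)$, so we can choose a local vector $\tilde{\bm u}^\star=(\lambda/(M_xd),\,0,\,\tilde u_z)$ with $\tilde u_z=\sqrt{\|\bm u\|^2-(\lambda/(M_xd))^2}$. This vector has norm $\|\bm u\|_2$. We then pick $\bm R\in\mathrm{SO}(3)$ with $\bm R\tilde{\bm u}^\star=\bm u$; it exists because $\mathrm{SO}(3)$ maps any vector to any other vector of equal norm. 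With this choice, the $x$-factor's numerator is $\sin\pi=0$. Its denominator is $\sin(\pi/M_x)\neq0$ since $M_x\ge2$. Hence $\rho(\bm R)=0$.

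The main obstacle is this nondegeneracy: the point $\tilde u_x=\lambda/(M_xd)$ must be a genuine zero of the Dirichlet kernel rather than a grating-lobe peak. This holds because $d\tilde u_x/\lambda=1/M_x\in(0,1)$. A second concern is whether the front-side constraint \eqref{const:front_MU} must also hold for this $\bm R$. The proposition only asserts existence of a pose, but to be safe I would use the remaining freedom, namely rotations about the axis $\bm u$ composed with the in-plane choice of $\tilde u_y$ sign. This lets us tilt the normal $\bm R\bm e_z$ toward $\bm d_1+\bm d_2$. Since $\bm d_1+\bm d_2\perp\bm u$, the normal can be placed with nonnegative inner product with both $\bm d_k$ whenever $\alpha<\pi$. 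I would check this last point as an aside and note that the statement as given does not require it.
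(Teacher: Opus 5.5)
Your derivation of $\rho(\bm R)=0$ is correct and follows the paper's route. You factor the correlation into the two array factors, then place the local $x$-component of $\bm R^T(\bm d_2-\bm d_1)$ at the first null $\lambda/(M_xd)$ of the larger dimension, and use that rotations preserve the length $L=2\sin(\alpha/2)$. Your explicit argument that the minimizing dimension has $M_x\ge 2$ is tidier than the paper's ``choose a valid dimension.''

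You depart from the paper only in where you put the leftover length $\sqrt{L^2-(\lambda/(M_xd))^2}$. You put it on the local $z$-axis. The paper instead takes the normal orthogonal to $\bm v=\bm d_2-\bm d_1$ and facing the users, e.g.\ the bisector $\bm b=(\bm d_1+\bm d_2)/\|\bm d_1+\bm d_2\|_2$. That forces $\Delta_z=0$, and the roll sends the leftover to $\Delta_y$. This costs nothing, because the $y$-factor is a finite sum and the product still vanishes.

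That choice is exactly where your front-side aside goes wrong. Rotations about $\bm u$ leave $\bm R^T\bm u=\tilde{\bm u}^\star$ fixed, so they leave $\bm n^T\bm u=\tilde u_z$ fixed for $\bm n=\bm R\bm e_z$. Since $\tilde u_z\neq 0$ in your construction, the normal can never be tilted onto $\bm b\perp\bm u$. With $\tilde u_y=0$ there is also no sign left to choose.

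Concretely, write $a=\bm n^T\bm b\le\sqrt{1-\tilde u_z^2/L^2}$. Then $\bm n^T\bm d_{1,2}=a\cos(\alpha/2)\mp(\tilde u_z/L)\sin(\alpha/2)$, so both users can be visible only if $|\tilde u_z|\le L\cos(\alpha/2)$. For your split this condition reads $\lambda/(M_xd)\ge 1-\cos\alpha$. It fails, for example, with $d=\lambda/2$, $M_x=4$ and $\alpha>60^\circ$, where the proposition's hypothesis holds. In that regime every pose realizing your $\tilde{\bm u}^\star$ places one user strictly behind the aperture.

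This matters for the statement itself, not just for feasibility. The $p=0$ correlation formula presupposes front-side-active paths, and a user behind the array gets zero gain. The fix is to adopt the paper's in-plane split, $\tilde{\bm u}^\star=(\lambda/(M_xd),\pm\sqrt{L^2-(\lambda/(M_xd))^2},0)$ with $\bm R\bm e_z=\bm b$. That gives $(\bm R\bm e_z)^T\bm d_k=\cos(\alpha/2)>0$ for both users whenever $\alpha<\pi$.
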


\begin{IEEEproof}
Let $\bm v=\bm d_2-\bm d_1$ and $L=\|\bm v\|_2=2\sin(\alpha/2)$. For a given pose, define the local direction-difference vector as $\tilde{\bm v}(\bm R)\triangleq\bm R^T\bm v=[\Delta_x,\Delta_y,\Delta_z]^T$, which is the user direction separation observed in the UPA LCS. Since the UPA lies in the local $x$-$y$ plane, only $\Delta_x$ and $\Delta_y$ contribute to the array-induced phase difference. For the centered $M_x\times M_y$ UPA, \eqref{eq:p0_phase_corr_revised} can be factorized as
\begin{align}
    M\rho(\bm R)
    =
    \nu
    S_{M_x}\!\left(\frac{2\pi d}{\lambda}\Delta_x\right)
    S_{M_y}\!\left(\frac{2\pi d}{\lambda}\Delta_y\right),
    \label{eq:p0_corr_factorization_revised}
\end{align}
where 
\begin{align}
    S_N(x)
    \triangleq
    \sum_{n=0}^{N-1}
    \exp\!\left(-j\left(n-\frac{N-1}{2}\right)x\right)
    =
    \frac{\sin(Nx/2)}{\sin(x/2)}.
\end{align}
Thus, $S_N(x)=0$ for $x=2\pi r/N$, $r=1,\ldots,N-1$, whenever $N\ge2$. 
Since $M_xM_y\ge2$, at least one array dimension has a nontrivial null. It is therefore sufficient to choose a valid dimension $s\in\{x,y\}$ with $M_s\ge2$ and realize
\begin{align}
    |\Delta_s|
    =
    \frac{\lambda}{M_s d},
    \label{eq:first_null_projection_revised}
\end{align}
where $M_s$ denotes the number of elements along dimension $s$. Since $\bm R$ is orthogonal, $\|\tilde{\bm v}(\bm R)\|_2=L$. With the UPA normal chosen to be orthogonal to $\bm v$ and oriented toward the two users, the full direction difference is projected onto the local $x$-$y$ plane, which yields $\Delta_z=0$. The remaining in-plane rotation can assign this length-$L$ projection to either local axis with any magnitude in $[-L,L]$. Hence, under \eqref{eq:p0_pose_orth_condition_revised}, there exists a valid dimension $s$ and an in-plane rotation satisfying \eqref{eq:first_null_projection_revised}, which makes one factor in \eqref{eq:p0_corr_factorization_revised} vanish and yields $\rho(\bm R)=0$.
\end{IEEEproof}

Proposition~\ref{prop:pose_phase_orthogonalization} provides a geometric condition under which UPA rotation can drive $\rho(\bm R)$ to zero without changing the channel-strength. It also indicates that the required angular separation decreases with the aperture size, since a larger UPA has a smaller first-null spacing in its array response. For illustration, Fig.~\ref{fig0} plots the minimum achievable correlation versus the departure angular separation $\alpha$ for $2\times2$, $3\times3$, and $4\times4$ UPAs with $d=\lambda/2$, where the minimization is evaluated by a dense search over feasible UPA poses.  All curves start from one at $\alpha=0^\circ$, since the two LoS array responses are fully aligned. As $\alpha$ increases, the pose can project the direction difference $\bm d_2-\bm d_1$ onto the planar aperture and create element-wise phase variations, thereby reducing the channel correlation. The first nulls occur at approximately $60^\circ$, $39^\circ$, and $29^\circ$ for the $2\times2$, $3\times3$, and $4\times4$ UPAs, respectively, which agrees with \eqref{eq:p0_pose_orth_condition_revised}. These results confirm that a larger aperture enables orthogonalization at a smaller user separation and that, once the condition is satisfied, pose control can remove the ZF projection loss through the factor $1-|\rho|^2$.

\begin{figure}[t]
	\begin{center}
		\includegraphics[width=0.48\textwidth]{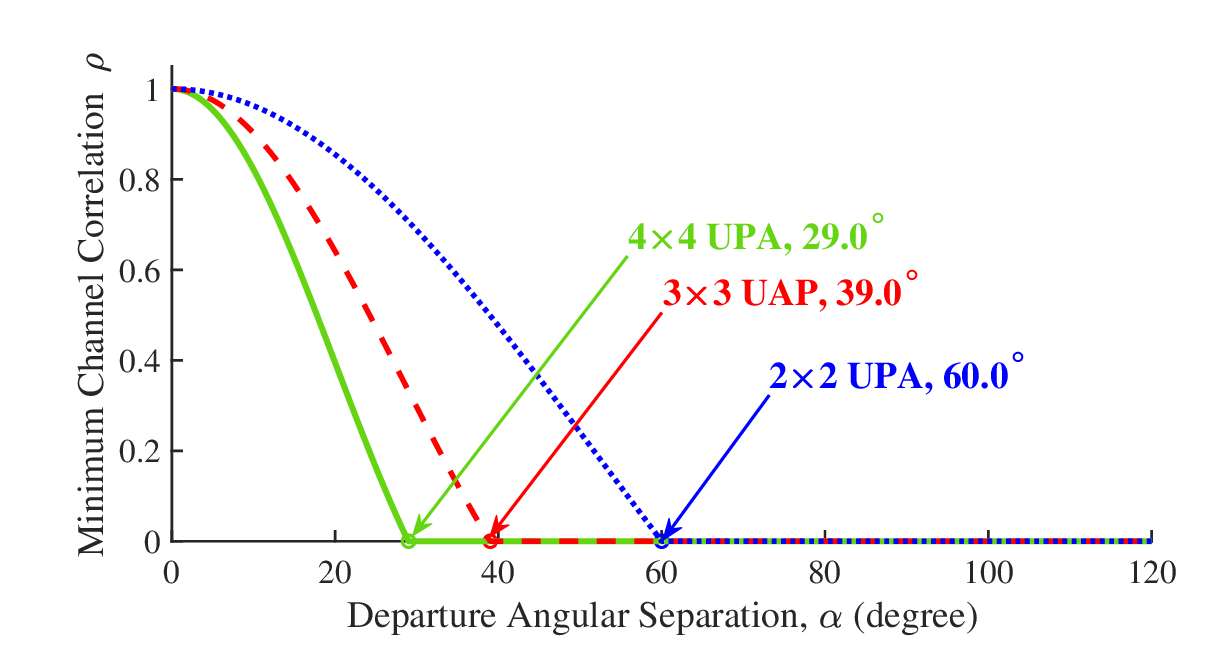}
		\caption{Minimum achievable channel correlation vs. departure separation angle with grid-search-based UPA pose selection.}
		\label{fig0}
	\end{center}
\end{figure}

\textbf{\textit{Proposed UPA-Pose Solution:}} We next discuss how to construct the UPA pose in two cases. When \eqref{eq:p0_pose_orth_condition_revised} is satisfied, an exact zero-correlation pose can be constructed in closed form. Specifically, if $M_x\ge2$ and $a_x\le L$, the first null can be enforced along the $x$-axis. Choose a front-facing UPA normal $\bm n$ orthogonal to $\bm v$, for which one simple choice is the angular-bisector normal $\bm n=(\bm d_1+\bm d_2)/\|\bm d_1+\bm d_2\|_2$. Define $\bm g=\bm v/L$ and $\bm s=\bm n\times\bm g$, and set $\theta_x=\arccos(a_x/L)$, $\bm r_x=\cos\theta_x\bm g+\sin\theta_x\bm s$, $\bm r_y=\bm n\times\bm r_x$, and $\bm r_z=\bm n$. Then $\bm R=[\bm r_x,\bm r_y,\bm r_z]\in\mathrm{SO}(3)$ satisfies $|\bm r_x^T\bm v|=a_x$, placing the $x$-axis array factor at its first null and yielding $\rho(\bm R)=0$. Similarly, if $M_y\ge2$ and $a_y\le L$, the first null can be enforced along the $y$-axis by swapping the roles of $x$ and $y$. However, when \eqref{eq:p0_pose_orth_condition_revised} is not satisfied, exact orthogonalization is not guaranteed because the available angular separation is insufficient to reach the first array-factor null. In this case, we keep the same angular-bisector normal to maximize the in-plane projection of $\bm v$, and then optimize the in-plane roll angle to minimize the residual correlation, e.g., by a one-dimensional search over $\psi$ with $\bm r_x(\psi)=\cos\psi\bm g+\sin\psi\bm s$, $\bm r_y(\psi)=\bm n\times\bm r_x(\psi)$, and $\bm r_z=\bm n$.

In summary, under isotropic elements, the UPA pose does not change the channel-strength terms but reshapes the aperture-induced phase differences between users. Its main role is therefore to improve spatial separability rather than link gain. This effect becomes stronger with a larger aperture, since a smaller user angular separation is sufficient to reach an array-factor null.

\subsection{Directional-Element Case: Pose--Boresight Coupling}
\label{subsec:pose_orientation_roles}

We next consider the directional-element case with $p>0$. Let $\tilde{\bm F}\triangleq[\tilde{\bm f}_1,\ldots,\tilde{\bm f}_M]\in \mathbb R^{3\times M}$. From the far-field LoS model in \eqref{eq:two_user_los_channel}, the channel strength of user $k$ is given by
\begin{align}
    A_k(\bm R,\tilde{\bm F})
    =
    \|\bm h_k\|_2^2
    \approx
    \frac{\beta_0}{\ell_k^2}
    \sum_{m=1}^{M}
    G_{m,k}(\bm R,\tilde{\bm F}).
    \label{eq:Ak_directional_gain}
\end{align}
Since $G_{m,k}=\kappa_{\max}[\tilde{\bm f}_m^T\bm R^T\bm d_k]_+^{2p}$, the UPA pose determines the local user direction $\bm R^T\bm d_k$, while the element boresight $\tilde{\bm f}_m$ adjusts its alignment with that direction. The same directional gains also enter the normalized correlation between the two user channels, which is given by
\begin{align}
    \rho(\bm R,\tilde{\bm F})
    =
    \frac{
    \nu
    \sum_{m=1}^{M}
    \sqrt{
    G_{m,1}
    G_{m,2}
    }
    \exp\!\Big(
    -j\frac{2\pi}{\lambda}
    \tilde{\bm p}_m^T
    (\tilde{\bm d}_2-\tilde{\bm d}_1)
    \Big)
    }
    {
    \sqrt{\sum_{m=1}^{M}G_{m,1}}
    \sqrt{\sum_{m=1}^{M}G_{m,2}}
    },
    \label{eq:weighted_corr_directional}
\end{align}
where $\nu\triangleq \exp\left(j2\pi(\ell_2-\ell_1)/\lambda\right)$. Equation~\eqref{eq:weighted_corr_directional} shows that, unlike the isotropic-element case, the channel correlation is a gain-weighted phasor sum. The UPA pose shapes the phase progression through $\tilde{\bm d}_2-\tilde{\bm d}_1=\bm R^T(\bm d_2-\bm d_1)$, while the element boresights affect the phasor weights through the directional gains. 

\begin{corollary}
\label{cor:directional_pose_orthogonalization}
Consider the two-user far-field LoS channel with directional elements, i.e., $p>0$. If the angular-separation condition in Proposition~\ref{prop:pose_phase_orthogonalization} holds, then the same pose construction can make the two directional-element LoS channels orthogonal under any common feasible boresight $\tilde{\bm f}_1=\cdots=\tilde{\bm f}_M=\tilde{\bm f}$ with nonzero gains to both users. For this construction, the channel strength of user $k$ is given by
\begin{align}
    A_k(\bm R,\tilde{\bm F})
    =
    \frac{\beta_0\kappa_{\max}M}{\ell_k^2}
    \left[
    \tilde{\bm f}^T\bm R^T\bm d_k
    \right]_+^{2p}.
    \label{eq:directional_gain_penalty_general}
\end{align}
\end{corollary}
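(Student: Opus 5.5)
The approach is to reduce the directional-element correlation \eqref{eq:weighted_corr_directional} to the isotropic one \eqref{eq:p0_phase_corr_revised} whenever all elements share a common boresight. Fix the pose $\bm R$ given by the construction after Proposition~\ref{prop:pose_phase_orthogonalization}: the angular-bisector normal $\bm n$, and the in-plane axis placed so that $|\Delta_s|=\lambda/(M_s d)$ along a dimension $s$ with $M_s\ge2$. Then fix any feasible $\tilde{\bm f}\in\mathcal F$ with $\tilde{\bm f}^T\bm R^T\bm d_k>0$ for $k=1,2$.

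\textbf{Step 1: the gains are independent of the element index.} Under the far-field approximation, $G_{m,k}=\kappa_{\max}[\tilde{\bm f}^T\tilde{\bm d}_k]_+^{2p}\triangleq g_k$ does not depend on $m$, and $g_k>0$ by assumption. Summing over $m$ in \eqref{eq:Ak_directional_gain} gives $A_k=\beta_0 M g_k/\ell_k^2$, which is exactly \eqref{eq:directional_gain_penalty_general}.

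\textbf{Step 2: the correlation reduces to the isotropic one.} In \eqref{eq:weighted_corr_directional}, the weight $\sqrt{g_1g_2}$ factors out of the numerator sum. The denominator equals $\sqrt{Mg_1}\sqrt{Mg_2}=M\sqrt{g_1g_2}$. Cancelling gives
\begin{align}
\rho(\bm R,\tilde{\bm F})=\frac{\nu}{M}\sum_{m=1}^{M}\exp\!\Big(-j\frac{2\pi}{\lambda}\tilde{\bm p}_m^T\bm R^T(\bm d_2-\bm d_1)\Big),
\end{align}
which is exactly $\rho(\bm R)$ in \eqref{eq:p0_phase_corr_revised}.

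\textbf{Step 3: orthogonality.} The factorization \eqref{eq:p0_corr_factorization_revised} applies unchanged. The chosen pose places one array factor $S_{M_s}$ at its first null, so $\rho(\bm R,\tilde{\bm F})=0$ for every such common boresight. This is where the condition \eqref{eq:p0_pose_orth_condition_revised} is used: it guarantees that the required projection length $\lambda/(M_sd)$ does not exceed $L=2\sin(\alpha/2)$.

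\textbf{Main obstacle: the nonzero-gain hypothesis.} The cancellation in Step 2 requires $g_1g_2>0$, and the statement assumes this, so the result holds as stated. To show the statement is not vacuous, I would also exhibit one admissible boresight: $\tilde{\bm f}=\bm e_z$. It is feasible for any $\theta_{\max}\ge0$, and under the bisector pose
\begin{align}
\bm e_z^T\bm R^T\bm d_k=\bm n^T\bm d_k=\cos(\alpha/2)>0
\end{align}
whenever $\alpha<\pi$. For the same pose this gives $A_k=\beta_0\kappa_{\max}M\cos^{2p}(\alpha/2)/\ell_k^2$. The same choice shows that the front constraint \eqref{const:front_MU} holds strictly for this pose.
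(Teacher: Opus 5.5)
Your proposal is correct and follows the same route as the paper: a common boresight makes $G_{m,k}$ independent of $m$. The gain weights in \eqref{eq:weighted_corr_directional} then cancel against the denominator, so the correlation collapses to the phase-only $\rho(\bm R)$ nulled by the first-null pose of Proposition~\ref{prop:pose_phase_orthogonalization}, and $A_k$ follows by summing \eqref{eq:Ak_directional_gain}. Your explicit cancellation $\sqrt{Mg_1}\sqrt{Mg_2}=M\sqrt{g_1g_2}$ and your non-vacuity check with $\tilde{\bm f}=\bm e_z$, which gives $\bm n^T\bm d_k=\cos(\alpha/2)>0$ under the bisector normal, are correct details that the paper's brief proof leaves implicit.
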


\begin{IEEEproof}
    With a common boresight $\tilde{\bm f}$, the far-field directional gain $G_{m,k}$ is independent of the element index $m$ for each user $k$. Hence, the gain weights in \eqref{eq:weighted_corr_directional} are uniform across the array and the normalized correlation reduces to the phase-only correlation up to a unit-modulus scalar. Therefore, the first-null pose construction in Proposition~\ref{prop:pose_phase_orthogonalization} also yields $\rho=0$. The strength expression in \eqref{eq:directional_gain_penalty_general} follows from \eqref{eq:Ak_directional_gain}.
\end{IEEEproof}

Corollary~\ref{cor:directional_pose_orthogonalization} shows that, with a common boresight, the pose-induced orthogonalization condition in Proposition~\ref{prop:pose_phase_orthogonalization} still applies to directional elements, although the resulting channel strength is scaled by the directional-gain factor in \eqref{eq:directional_gain_penalty_general}. With element-wise boresight steering, however, the weights $\sqrt{G_{m,1}G_{m,2}}$ in \eqref{eq:weighted_corr_directional} generally become nonuniform, so a phase-only array-factor null created by UPA rotation may not null the true gain-weighted correlation. This highlights the pose--boresight tradeoff: the UPA pose controls aperture-domain phase separation and coarse gain alignment, whereas the element boresights refine direction-dependent gains while also reshaping the correlation weights.

\textbf{\textit{Proposed UPA-Pose Solution:}}
We construct the UPA pose through a low-dimensional geometric search over the normal tilt and the in-plane roll. To separate the pose-induced phase effect from the gain weights in \eqref{eq:weighted_corr_directional}, we use the phase-only steering vector $[\bm a(\tilde{\bm d})]_m=\exp(-j\frac{2\pi}{\lambda}\tilde{\bm p}_m^T\tilde{\bm d})$ and define the phase-only correlation as
\begin{align}
    \rho_{\rm ph}(\bm R)
    =
    \frac{1}{M}
    \bm a(\bm R^T\bm d_1)^H
    \bm a(\bm R^T\bm d_2).
    \label{eq:rho_ph_compact}
\end{align}
The corresponding phase-separation factor is $\widehat S(\bm R)=1-|\rho_{\rm ph}(\bm R)|^2$, which serves as a tractable proxy for the pose-induced spatial separability. To parameterize the pose, let $\alpha=\arccos(\bm d_1^T\bm d_2)$, $\bm v=\bm d_2-\bm d_1$, and $L=\|\bm v\|_2=2\sin(\alpha/2)$. Define the angular bisector $\bm b=(\bm d_1+\bm d_2)/\|\bm d_1+\bm d_2\|_2$ and the direction-difference unit vector $\bm g=\bm v/L$. The UPA normal is searched in the two-user plane as $\bm n(\tau)=\cos\tau\,\bm b+\sin\tau\,\bm g$, where $\tau\in\mathcal T=[-(\pi-\alpha)/2,(\pi-\alpha)/2]$. This gives the local zenith angles $\gamma_1(\tau)=|\alpha/2+\tau|$ and $\gamma_2(\tau)=|\alpha/2-\tau|$, and the available in-plane projection $L_{\perp}(\tau)=L\cos\tau$. Thus, $\tau$ captures the tradeoff between directional-gain coverage and aperture-domain phase separation. For each $\tau$, define $S_{\rm ph}^{\star}(\tau)\triangleq\max_{\psi}\big(1-|\rho_{\rm ph}(\bm R(\tau,\psi))|^2\big)$ as the best phase-separation factor after optimizing the in-plane roll angle $\psi$. We choose the normal tilt by
\begin{align}
    \tau^{\star}
    \in
    \arg\max_{\tau\in\mathcal T}
    &\sum_{k=1}^{2}
    \omega_k
    \log_2\!\Big(
    1+
    \frac{P_k}{\sigma^2}
    \frac{\beta_0\kappa_{\max}M}{\ell_k^2} \notag \\
    &\quad 
    \cos^{2p}
    \left(
    [\gamma_k(\tau)-\theta_{\max}]_+
    \right)
    S_{\rm ph}^{\star}(\tau)
    \Big).
    \label{eq:tau_design_compact}
\end{align}
Here, the cosine term approximates the best directional alignment allowed by the spherical-cap steering constraint, while $S_{\rm ph}^{\star}(\tau)$ measures the phase-separation capability for the given normal tilt. After obtaining $\tau^\star$, set $\bm n^\star=\bm n(\tau^\star)$ and $L_\perp^\star=L\cos\tau^\star$. Let $a_x=\lambda/(M_xd)$ and $a_y=\lambda/(M_yd)$ denote the first-null projections along the two aperture dimensions. If $L_\perp^\star\ge\min\{a_x,a_y\}$, a first-null roll can be selected in closed form. For example, if the $x$-axis null is used, $\psi^\star=\arccos(a_x/L_\perp^\star)$ yields $|\Delta_x|=a_x$; otherwise, the $y$-axis null can be enforced similarly. If no first-null roll is available, $\psi^\star$ is obtained by a one-dimensional search that minimizes $|\rho_{\rm ph}(\bm R)|$. Finally, define $\bm u^\star=(\bm I-\bm n^\star(\bm n^\star)^T)\bm v/\|(\bm I-\bm n^\star(\bm n^\star)^T)\bm v\|_2$, $\bm s^\star=\bm n^\star\times\bm u^\star$, $\bm r_x=\cos\psi^\star\bm u^\star+\sin\psi^\star\bm s^\star$, $\bm r_y=\bm n^\star\times\bm r_x$, and $\bm r_z=\bm n^\star$. The resulting pose $\bm R^\star=[\bm r_x,\bm r_y,\bm r_z]$ satisfies $\bm R^\star\in\mathrm{SO}(3)$ and $\bm R^\star\bm e_z=\bm n^\star$.

\begin{remark}
    The pose search provides coverage-aware phase separation: the normal tilt $\tau$ trades feasible directional-gain alignment against the available aperture projection, while the roll $\psi^\star$ reduces the resulting phase-domain correlation.
\end{remark}

\textbf{\textit{Proposed Element-Boresight Solution:}}
With $\bm R^\star$ fixed, the element boresights are selected based on the element-wise participation in the phase-only ZF directions. Let $\tilde{\bm d}_k^\star=(\bm R^\star)^T\bm d_k$ and $\bar{\bm H}=[\bm a(\tilde{\bm d}_1^\star),\bm a(\tilde{\bm d}_2^\star)]^H$. When $\bar{\bm H}$ has full row rank, we compute $\bar{\bm W}=\bar{\bm H}^H(\bar{\bm H}\bar{\bm H}^H)^{-1}$ and denote $\bar{\bm W}=[\bar{\bm w}_1,\bar{\bm w}_2]$. Since $|[\bar{\bm w}_k]_m|^2$ measures the contribution of element $m$ to the phase-only ZF direction of user $k$, the boresight of element $m$ is chosen as
\begin{align}
    \tilde{\bm f}_m^\star
    =
    \Pi_{\mathcal F}
    \left(
    \frac{
    \sum_{k=1}^{2}
    \omega_k
    |[\bar{\bm w}_k]_m|^2
    \tilde{\bm d}_k^\star
    }
    {
    \left\|
    \sum_{k=1}^{2}
    \omega_k
    |[\bar{\bm w}_k]_m|^2
    \tilde{\bm d}_k^\star
    \right\|_2
    }
    \right),
    \quad \forall m,
    \label{eq:orientation_seq_compact}
\end{align}
where $\Pi_{\mathcal F}(\cdot)$ denotes the projection onto $\mathcal F$, and the current feasible boresight is retained if the numerator is zero. This rule aligns each element with a ZF-participation-weighted local user direction, and the final ZF precoders are computed from the effective channels induced by $\bm R^\star$ and $\{\tilde{\bm f}_m^\star\}$.

In summary, with directional elements, UPA pose and element boresights jointly shape two coupled channel features: spatial separability and channel strength. The UPA pose changes the array-induced phase progression and the local user directions, thereby affecting inter-user separability and coarse directional-gain alignment. Element-boresight steering does not change the propagation phases, but refines the direction-dependent gains, which determine the channel strengths $A_k$ and the gain weights in $\rho(\bm R,\tilde{\bm F})$. Hence, reducing the phase-only correlation alone is insufficient, and joint pose--boresight design is required to balance spatial separability and channel-strength enhancement.

\section{Joint Beamforming and Array Geometry Optimization}
\label{sec:AO_solution}

The joint WSR maximization in \eqref{prob:WSR_MU} is highly nonconvex due to the coupling among the transmit beamformers, UPA pose, and element boresights. To obtain an efficient and implementable solution, we develop an alternating optimization (AO) framework with three structured blocks: WMMSE-based beamforming update, rotation-manifold-based UPA pose optimization, and spherical-cap-constrained element-boresight refinement. Each block update preserves feasibility and does not decrease the WSR objective, thereby yielding a nondecreasing objective sequence.

\subsection{Optimization for Transmit Beamformers}

For fixed UPA pose $\bm R$ and element boresights $\{\tilde{\bm f}_m\}$, the channels $\{\bm h_k\}$ are fixed and \eqref{prob:WSR_MU} reduces to the conventional WSR beamforming problem under a sum-power constraint. We solve this problem using the standard WMMSE reformulation \cite{5756489}. For each user $k$, introduce a scalar receive equalizer $u_k\in\mathbb C$ for estimating $s_k$ from $y_k$. The mean-square error (MSE) is defined as $e_k=\mathbb E[|u_k y_k-s_k|^2]$, which can be written as
\begin{align}
    e_k
    =
    |u_k|^2
    \big(\sum_{i=1}^K |\bm h_k^H\bm w_i|^2+\sigma^2\big)
    -2\Re\{u_k\bm h_k^H\bm w_k\}+1.
    \label{eq:mse_k}
\end{align}
With a positive auxiliary weight $v_k>0$ associated with each MSE, the WMMSE problem can be written as
\begin{align}
    \min_{\{u_k\},\{v_k\},\{\bm w_k\}} \,\,
    &\sum_{k=1}^K \omega_k\big(v_k e_k-\ln v_k\big) \\
    \mathrm{s.t.}\qquad
    &\sum_{k=1}^K\|\bm w_k\|_2^2\le P_{\max}.
    \label{prob:wmmse}
\end{align}
For given beamformers, the optimal equalizers and auxiliary weights are given by
\begin{align}
    u_k
    =
    \frac{(\bm h_k^H\bm w_k)^*}
    {\sum_{i=1}^K |\bm h_k^H\bm w_i|^2+\sigma^2},
    \qquad
    v_k=e_k^{-1}.
    \label{eq:u_v_update}
\end{align}
For given $\{u_k,v_k\}$, the beamformers are updated by
\begin{align}
    \bm w_k
    =
    \left(
    \sum_{j=1}^K
    \omega_j v_j |u_j|^2 \bm h_j\bm h_j^H
    +\mu\bm I
    \right)^{-1}
    \omega_k v_k u_k^*\bm h_k,
    \label{eq:w_update}
\end{align}
where $\mu\ge0$ is adjusted by bisection to satisfy the sum-power constraint. The above updates are repeated until convergence.

\subsection{Optimization for UPA Pose}

For fixed beamformers $\{\bm w_k\}$ and element boresights $\{\tilde{\bm f}_m\}$, the UPA pose optimization subproblem can be written over the rotation manifold $\mathrm{SO}(3)$ as
\begin{align}
    \max_{\bm R\in\mathrm{SO}(3)}\quad
    &\Phi(\bm R)
    \triangleq
    \sum_{k=1}^K \omega_k r_k(\bm R)
    \label{prob:pose_R}\\
    \mathrm{s.t.}\quad
    &(\bm R\bm e_z)^T(\bm q_k-\bm p_0)\ge0,\quad \forall k.
    \label{prob:pose_R_front}
\end{align}
Here, $\bm R$ affects the channel through both the element positions $\bm p_m=\bm p_0+\bm R\tilde{\bm p}_m$ and the global boresights of each element $\bm f_m=\bm R\tilde{\bm f}_m$. Problem \eqref{prob:pose_R} is nonconvex due to the rotation manifold and the nonlinear dependence of the WSR on the pose-dependent channel.

We update the UPA pose intrinsically on $\mathrm{SO}(3)$ via a local incremental rotation, which preserves the rotation-matrix constraints by construction. Specifically, at the $t$-th iteration, given the current feasible pose $\bm R^{(t)}$ and a small search direction $\bm\delta\in\mathbb R^3$, the candidate pose is generated as
\begin{align}
    \bm R^+(\bm\delta)
    =
    \bm R^{(t)}\exp([\bm\delta]_\times),
    \label{eq:pose_update_so3_new}
\end{align}
where $[\bm\delta]_\times$ is the skew-symmetric matrix satisfying $[\bm\delta]_\times\bm a=\bm\delta\times\bm a$ for any $\bm a\in\mathbb R^3$, with $\times$ denoting the vector cross product. For a sufficiently small local rotation increment $\bm\delta$, the exponential map admits the first-order approximation
\begin{align}
    \exp([\bm\delta]_\times)
    =
    \bm I+[\bm\delta]_\times
    +
    \mathcal O(\|\bm\delta\|_2^2).
\end{align}
Thus, the UPA normal vector in \eqref{prob:pose_R_front} can be locally approximated as
\begin{align}
    \bm R^+(\bm\delta)\bm e_z
    \approx
    \bm R^{(t)}\bm e_z
    +\bm R^{(t)}[\bm\delta]_\times\bm e_z.
    \label{eq:lin_n_new}
\end{align}
The front half-space constraint admits the local approximation
\begin{align}
    \big(\bm R^{(t)}\bm e_z
    +\bm R^{(t)}[\bm\delta]_\times\bm e_z\big)^T(\bm q_k-\bm p_0)
    \ge 0,\quad \forall k.
    \label{eq:lin_front_new}
\end{align}
By defining $\bm b_k^{(t)}\triangleq\bm e_z\times(\bm R^{(t)})^T(\bm q_k-\bm p_0)$, we have $\big(\bm R^{(t)}[\bm\delta]_\times\bm e_z\big)^T(\bm q_k-\bm p_0) = (\bm b_k^{(t)})^T\bm\delta$, and hence the linearized constraint can be written as
\begin{align}
    (\bm R^{(t)}\bm e_z)^T(\bm q_k-\bm p_0)+(\bm b_k^{(t)})^T\bm\delta\ge0,\quad \forall k.
    \label{eq:lin_front_bk}
\end{align}
Next, we construct a local first-order ascent model of the objective on $\mathrm{SO}(3)$. To this end, we define the local objective with respect to an incremental rotation $\bm\delta$ as
\begin{align}
    \varphi_t(\bm\delta) = \Phi\big(\bm R^{(t)}\exp([\bm\delta]_\times)\big),
\end{align}
where $\bm\delta=\bm 0$ corresponds to the current pose $\bm R^{(t)}$. The local ascent direction is determined by the right-trivialized gradient of $\Phi(\bm R)$ at $\bm R^{(t)}$, defined as
\begin{align}
    \bm g_R^{(t)}
    =
    \nabla_{\bm\delta}\varphi_t(\bm\delta)\big|_{\bm\delta=\bm 0}.
\end{align}
Since $\Phi(\bm R)=\sum_{k=1}^K\omega_k r_k(\bm R)$, the $\ell$-th component of $\bm g_R^{(t)}$ can be written as
\begin{align}
    [\bm g_R^{(t)}]_\ell
    =
    \sum_{k=1}^K
    \omega_k
    \frac{\partial r_k}{\partial\delta_\ell},
    \quad \ell=1,2,3.
    \label{eq:pose_grad_wsr}
\end{align}
It remains to compute the rate derivative with respect to each local rotation component. Define $z_{k,i}=\bm h_k^H\bm w_i$, $T_k=\sum_{i=1}^K|z_{k,i}|^2+\sigma^2$, and $D_k=\sum_{i\ne k}|z_{k,i}|^2+\sigma^2$. Then, by differentiating the rate expression and applying the chain rule through the pose-dependent channel, we obtain
\begin{align}
    \frac{\partial r_k}{\partial \delta_\ell}
    =
    \frac{2}{\ln 2}
    \Bigg[
    &\frac{
    \Re\!\left\{
    \sum_{i=1}^K
    z_{k,i}^*
    \left(
    \frac{\partial \bm h_k}{\partial \delta_\ell}
    \right)^H
    \bm w_i
    \right\}
    }{T_k}
    \notag\\
    &-
    \frac{
    \Re\!\left\{
    \sum_{i\ne k}
    z_{k,i}^*
    \left(
    \frac{\partial \bm h_k}{\partial \delta_\ell}
    \right)^H
    \bm w_i
    \right\}
    }{D_k}
    \Bigg].
    \label{eq:rate_grad_pose}
\end{align}
The remaining task is to evaluate the channel derivative $\partial\bm h_k/\partial\delta_\ell$, which is obtained component-wise as
\begin{align}
    \frac{\partial \bm h_k}{\partial \delta_\ell}
    =
    \left[
    \frac{\partial[\bm h_k]_1}{\partial \delta_\ell},
    \ldots,
    \frac{\partial[\bm h_k]_M}{\partial \delta_\ell}
    \right]^T,
\end{align}
where the derivative of the $m$-th entry is given by
\begin{align}
    \frac{\partial[\bm h_k]_m}{\partial \delta_\ell}
    =
    \frac{\partial (h_{k,m}^{\mathrm{LoS}})^*}{\partial \delta_\ell}
    +
    \sum_{q=1}^Q
    \frac{\partial (h_{k,m}^{(q)})^*}{\partial \delta_\ell}.
    \label{eq:channel_grad_pose}
\end{align}
To unify the LoS and NLoS derivatives, consider a generic BS-side path from element $m$ to a target point $\bm a$, where $\bm a=\bm q_k$ for the LoS link and $\bm a=\bm s_q$ for the BS-to-cluster segment of the $q$-th NLoS path. Define $r_{m,a}=\|\bm a-\bm p_m\|$, $\bm d_{m,a}=(\bm a-\bm p_m)/r_{m,a}$, and $c_{m,a}=\bm f_m^T\bm d_{m,a}$. Let $h$ denote the corresponding path coefficient in $[\bm h_k]_m$. For a front-side active path with $c_{m,a}>0$, its derivative with respect to the local rotation component $\delta_\ell$ can be written as
\begin{align}
    \frac{\partial h}{\partial \delta_\ell}
    =
    h
    \left(
    \frac{p}{c_{m,a}}\frac{\partial c_{m,a}}{\partial \delta_\ell}
    -
    \frac{1}{r_{m,a}}\frac{\partial r_{m,a}}{\partial \delta_\ell}
    +
    j\frac{2\pi}{\lambda}
    \frac{\partial r_{m,a}}{\partial \delta_\ell}
    \right),
    \label{eq:path_grad_pose}
\end{align}
where
\begin{align}
    \frac{\partial r_{m,a}}{\partial \delta_\ell}
    =
    -\bm d_{m,a}^T
    \bm R^{(t)}(\bm e_\ell\times\tilde{\bm p}_m),
    \label{eq:r_grad_pose}
\end{align}
and
\begin{align}
    \frac{\partial c_{m,a}}{\partial \delta_\ell}
    &=
    \left[
    \bm R^{(t)}(\bm e_\ell\times\tilde{\bm f}_m)
    \right]^T
    \bm d_{m,a}
    \notag\\
    &\quad
    -
    \frac{1}{r_{m,a}}
    \bm f_m^T
    \left(
    \bm I-\bm d_{m,a}\bm d_{m,a}^T
    \right)
    \bm R^{(t)}(\bm e_\ell\times\tilde{\bm p}_m).
    \label{eq:c_grad_pose}
\end{align}
For paths with $c_{m,a}\le0$, the directional gain is zero under the adopted half-space cosine pattern, and the corresponding path derivative is set to zero.

Based on the analytical gradient and the local constraint approximation, the incremental rotation direction is obtained by solving
\begin{align}
    \max_{\bm\delta\in\mathbb R^3}\quad
    &(\bm g_R^{(t)})^T\bm\delta
    \label{prob:delta_new}\\
    \mathrm{s.t.}\quad
    & (\bm R^{(t)}\bm e_z)^T(\bm q_k-\bm p_0)+(\bm b_k^{(t)})^T\bm\delta\ge0,\quad \forall k.
    \label{prob:delta_front_new}\\
    &\|\bm\delta\|_\infty\le\delta_{\max}.
    \label{prob:delta_tr_new}
\end{align}
The trust-region constraint in \eqref{prob:delta_tr_new} restricts the local rotation magnitude and helps maintain the accuracy of the first-order approximation. Since \eqref{prob:delta_new} is only a three-dimensional linear program with linearized visibility and box constraints, it can be solved efficiently by standard linear programming solvers.

Given the search direction $\bm\delta$, the step size $\rho$ is selected by a feasible Armijo backtracking procedure. Starting from $\rho=\rho_0$, we shrink $\rho\leftarrow\xi\rho$ with $\xi\in(0,1)$ until the candidate update satisfies both
\begin{align}
    \Phi(\bm R^+)
    \ge
    \Phi(\bm R^{(t)})
    +
    c\rho(\bm g_R^{(t)})^T\bm\delta,
    \label{eq:armijo_pose_new}
\end{align}
and the original front half-space constraints
\begin{align}
    (\bm R^+\bm e_z)^T(\bm q_k-\bm p_0)\ge0,\quad \forall k.
    \label{eq:pose_exact_feas_new}
\end{align}
Since every accepted update satisfies \eqref{eq:pose_exact_feas_new} and \eqref{eq:armijo_pose_new}, the UPA pose remains feasible and the WSR is monotonically nondecreasing during the pose update. The complete algorithm for UPA pose optimization is summarized in Algorithm~\ref{alg:pose_so3}.

\begin{algorithm}[t]\small
\caption{Optimization for UPA Pose}
\label{alg:pose_so3}
\begin{algorithmic}[1]
\REQUIRE $\{\bm w_k\}$, $\{\tilde{\bm f}_m\}$, feasible initial pose $\bm R^{(0)}$, and algorithmic parameters $\delta_{\max}$, $\rho_0$, $\rho_{\min}$, $\xi$, $c$, $\varepsilon_{\rm obj}$, and $T_{\mathrm{R}}$.
\STATE Initialize $\bm R\leftarrow\bm R^{(0)}$ and $\Phi_{\rm cur}\leftarrow\Phi(\bm R)$.
\FOR{$t=0,1,\ldots,T_{\mathrm{R}}-1$}
    \STATE Compute $\bm g_R^{(t)}$ by \eqref{eq:pose_grad_wsr}--\eqref{eq:c_grad_pose}.
    \STATE Obtain the local rotation direction $\bm\delta$ by solving \eqref{prob:delta_new}.
    \STATE Set $\Phi_{\rm old}\leftarrow\Phi_{\rm cur}$.
    \STATE Find a $\rho\in\{\rho_0,\xi\rho_0,\xi^2\rho_0,\ldots\}$ such that $\rho\ge\rho_{\min}$ and the candidate $\bm R^+=\bm R\exp(\rho[\bm\delta]_\times)$ satisfies \eqref{eq:armijo_pose_new} and \eqref{eq:pose_exact_feas_new}.
    \IF{no such $\rho$ is found}
        \STATE \textbf{break}
    \ENDIF
    \STATE Update $\bm R\leftarrow\bm R^+$ and $\Phi_{\rm cur}\leftarrow\Phi(\bm R)$.
    \IF{$|\Phi_{\rm cur}-\Phi_{\rm old}|<\varepsilon_{\rm obj}$}
        \STATE \textbf{break}
    \ENDIF
\ENDFOR
\STATE \textbf{return} $\bm R$.
\end{algorithmic}
\end{algorithm}

\subsection{Optimization for Element Boresights}

For fixed beamformers $\{\bm w_k\}$ and UPA pose $\bm R$, we optimize the element boresights $\{\tilde{\bm f}_m\}$ in the UPA LCS. Let $\tilde{\bm F}\triangleq[\tilde{\bm f}_1,\ldots,\tilde{\bm f}_M]\in \mathbb R^{3\times M}$. The element-boresight subproblem is
\begin{align}
    \max_{\tilde{\bm F}}\quad
    & \Psi(\tilde{\bm F})
    \triangleq
    \sum_{k=1}^K\omega_k r_k(\tilde{\bm F})
    \label{prob:ori}\\
    \mathrm{s.t.}\quad
    & \tilde{\bm f}_m\in\mathcal F,\quad \forall m,
    \label{const:ori_cap}
\end{align}
where $\mathcal F$ is the spherical-cap feasible set defined in \eqref{eq:ori_feasible_set}. With $\bm R$ fixed, $\tilde{\bm F}$ affects the rates through the global boresights $\bm f_m=\bm R\tilde{\bm f}_m$ and the directional gains in \eqref{eq:RA_pattern_MU}. Problem \eqref{prob:ori} is nonconvex due to the nonconcave WSR objective and the nonconvex spherical-cap constraints. To this end, we adopt a Frank--Wolfe (FW)-type manifold-aware ascent scheme.

Let $\bm g_m\triangleq\nabla_{\tilde{\bm f}_m}\Psi(\tilde{\bm F})$ denote the Euclidean gradient with respect to $\tilde{\bm f}_m$. Using the previously defined quantities $z_{k,i}=\bm h_k^H\bm w_i$, $T_k=\sum_{i=1}^K|z_{k,i}|^2+\sigma^2$, and $D_k=\sum_{i\ne k}|z_{k,i}|^2+\sigma^2$, the gradient can be expressed as
\begin{align}
    \bm g_m
    =
    \sum_{k=1}^K
    \frac{2\omega_k}{\ln 2}
    \Bigg[
    &\frac{
    \Re\!\left\{
    \sum_{i=1}^K
    z_{k,i}^*
    (\bm J_{k,m}^{\rm ori})^H\bm w_i
    \right\}
    }{T_k}
    \notag\\
    &\quad -
    \frac{
    \Re\!\left\{
    \sum_{i\ne k}
    z_{k,i}^*
    (\bm J_{k,m}^{\rm ori})^H\bm w_i
    \right\}
    }{D_k}
    \Bigg],
    \label{eq:ori_grad_gm}
\end{align}
where $\bm J_{k,m}^{\rm ori}\triangleq\partial\bm h_k/\partial\tilde{\bm f}_m^T$. Since $\tilde{\bm f}_m$ only affects the $m$-th transmit element, $\bm J_{k,m}^{\rm ori}$ is sparse and its only nonzero row is given by
\begin{align}
    [\bm J_{k,m}^{\rm ori}]_{m,:}
    =
    \left(
    \frac{\partial (h_{k,m}^{\mathrm{LoS}})^*}{\partial \tilde{\bm f}_m}
    +
    \sum_{q=1}^Q
    \frac{\partial (h_{k,m}^{(q)})^*}{\partial \tilde{\bm f}_m}
    \right)^T .
    \label{eq:ori_J_row}
\end{align}
For an active LoS path with $c_{m,k}\triangleq\bm f_m^T\bm d_{m,k}>0$, the derivative is given by
\begin{align}
    \frac{\partial (h_{k,m}^{\mathrm{LoS}})^*}{\partial \tilde{\bm f}_m}
    =
    \frac{p(h_{k,m}^{\mathrm{LoS}})^*}{c_{m,k}}
    \bm R^T\bm d_{m,k}.
    \label{eq:ori_los_grad}
\end{align}
Similarly, for an active NLoS BS-to-cluster segment with $c_{m,q}\triangleq\bm f_m^T\bm d_{m,q}>0$, we have
\begin{align}
    \frac{\partial (h_{k,m}^{(q)})^*}{\partial \tilde{\bm f}_m}
    =
    \frac{p(h_{k,m}^{(q)})^*}{c_{m,q}}
    \bm R^T\bm d_{m,q}.
    \label{eq:ori_nlos_grad}
\end{align}
For inactive paths with $c_{m,k}\le0$ or $c_{m,q}\le0$, the corresponding derivative is set to zero due to the half-space radiation pattern. To define a local ascent score on the unit sphere, we project the Euclidean gradient onto the tangent space at $\tilde{\bm f}_m$:
\begin{align}
    \bm g_m^{\rm tan}
    =
    \left(\bm I-\tilde{\bm f}_m\tilde{\bm f}_m^T\right)\bm g_m,
    \label{eq:ori_tangent}
\end{align}
where $(\bm g_m^{\rm tan})^T\tilde{\bm f}_m=0$. The local linearized ascent from $\tilde{\bm f}_m$ to a feasible point $\tilde{\bm f}\in\mathcal F$ is proportional to $(\bm g_m^{\rm tan})^T(\tilde{\bm f}-\tilde{\bm f}_m)$. Hence, the FW-type linear oracle is given by
\begin{align}
    \tilde{\bm f}_m^{\rm FW}
    \in
    \arg\max_{\tilde{\bm f}\in\mathcal F}
    (\bm g_m^{\rm tan})^T\tilde{\bm f}.
    \label{eq:ori_oracle}
\end{align}
If $\|\bm g_m^{\rm tan}\|_2=0$, we set $\tilde{\bm f}_m^{\rm FW}=\tilde{\bm f}_m$. Otherwise, let $\hat{\bm g}_m=\bm g_m^{\rm tan}/\|\bm g_m^{\rm tan}\|_2$ and $c_\theta=\cos\theta_{\max}$. When the horizontal projection is nonzero, define
\begin{align}
    \bm e_m^\perp
    =
    \frac{
    (\bm I-\bm e_z\bm e_z^T)\hat{\bm g}_m
    }{
    \|(\bm I-\bm e_z\bm e_z^T)\hat{\bm g}_m\|_2
    } .
    \label{eq:horizontal_projection}
\end{align}
Then the oracle solution is obtained as
\begin{align}
    \tilde{\bm f}_m^{\rm FW}
    =
    \begin{cases}
    \hat{\bm g}_m,
    & \hat{\bm g}_m^T\bm e_z\ge c_\theta,\\[1mm]
    c_\theta\bm e_z+\sqrt{1-c_\theta^2}\,\bm e_m^\perp,
    & \hat{\bm g}_m^T\bm e_z<c_\theta.
    \end{cases}
    \label{eq:ori_oracle_closed_form}
\end{align}
If the denominator in \eqref{eq:horizontal_projection} is zero in the second case, any boundary point of $\mathcal F$ satisfying $\tilde{\bm f}^T\bm e_z=c_\theta$ is optimal.

With the obtained $\{\tilde{\bm f}_m^{\rm FW}\}$, define the FW gap as
\begin{align}
    \Delta_{\rm FW}(\tilde{\bm F})
    \triangleq
    \sum_{m=1}^M
    (\bm g_m^{\rm tan})^T(\tilde{\bm f}_m^{\rm FW}-\tilde{\bm f}_m).
    \label{eq:fw_gap}
\end{align}
If $\Delta_{\rm FW}(\tilde{\bm F})$ is below a small threshold, the boresight update terminates. Otherwise, we update along $\tilde{\bm f}_m^{\rm FW}-\tilde{\bm f}_m$ with step size $\rho\in(0,1]$. To preserve feasibility, we use a normalized convex-combination update followed by a projection onto $\mathcal F$:
\begin{align}
    \bm v_m(\rho)
    &=
    (1-\rho)\tilde{\bm f}_m+\rho \tilde{\bm f}_m^{\rm FW},\\
    \tilde{\bm f}_m^+(\rho)
    &=
    \Pi_{\mathcal F}
    \left(
    \frac{\bm v_m(\rho)}{\|\bm v_m(\rho)\|_2}
    \right),
    \label{eq:ori_update}
\end{align}
where $\Pi_{\mathcal F}(\cdot)$ denotes the projection onto $\mathcal F$. The step size is selected by Armijo backtracking:
\begin{align}
    \Psi(\tilde{\bm F}^+(\rho))
    \ge
    \Psi(\tilde{\bm F})
    +
    c\rho\Delta_{\rm FW}(\tilde{\bm F}),
    \label{eq:ori_armijo}
\end{align}
where $c\in(0,1)$. Starting from $\rho_0$, we shrink $\rho\leftarrow\xi\rho$ with $\xi\in(0,1)$ until \eqref{eq:ori_armijo} holds. By construction, the accepted update preserves the spherical-cap constraints and yields a nondecreasing WSR sequence. The element-boresight update is summarized in Algorithm~\ref{alg:ori_only}.

\begin{algorithm}[t]\small
\caption{Optimization for Element Boresights}
\label{alg:ori_only}
\begin{algorithmic}[1]
\REQUIRE $\{\bm w_k\}$, $\bm R$, $\theta_{\max}$, feasible initial boresights $\tilde{\bm F}^{(0)}$, and algorithmic parameters $\rho_0$, $\rho_{\min}$, $\xi$, $c$, $\varepsilon_{\rm FW}$, $\varepsilon_{\rm obj}$, and $T_{\mathrm{F}}$.
\STATE Initialize $\tilde{\bm F}\leftarrow\tilde{\bm F}^{(0)}$ and $\Psi_{\rm cur}\leftarrow\Psi(\tilde{\bm F})$.
\FOR{$t=0,1,\ldots,T_{\mathrm{F}}-1$}
    \STATE Compute $\bm g_m$ by \eqref{eq:ori_grad_gm} and $\bm g_m^{\rm tan}$ by \eqref{eq:ori_tangent}, $\forall m$.
    \STATE Obtain $\tilde{\bm f}_m^{\rm FW}$ by \eqref{eq:ori_oracle_closed_form}, $\forall m$, and compute $\Delta_{\rm FW}$ by \eqref{eq:fw_gap}.
    \IF{$\Delta_{\rm FW}\le\varepsilon_{\rm FW}$}
        \STATE \textbf{break}
    \ENDIF
    \STATE Set $\Psi_{\rm old}\leftarrow\Psi_{\rm cur}$.
    \STATE Find a step size $\rho\in\{\rho_0,\xi\rho_0,\xi^2\rho_0,\ldots\}$ such that $\rho\ge\rho_{\min}$ and $\tilde{\bm F}^{+}(\rho)$ in \eqref{eq:ori_update} satisfies \eqref{eq:ori_armijo}.
    \IF{no such $\rho$ is found}
        \STATE \textbf{break}
    \ENDIF
    \STATE Update $\tilde{\bm F}\leftarrow\tilde{\bm F}^{+}(\rho)$ and $\Psi_{\rm cur}\leftarrow\Psi(\tilde{\bm F})$.
    \IF{$|\Psi_{\rm cur}-\Psi_{\rm old}|<\varepsilon_{\rm obj}$}
        \STATE \textbf{break}
    \ENDIF
\ENDFOR
\STATE \textbf{return} $\tilde{\bm F}$.
\end{algorithmic}
\end{algorithm}

\subsection{Overall AO Algorithm and Complexity Analysis}
\label{anaAO}
The overall AO procedure is summarized in Algorithm~\ref{alg:overall_ao}. Starting from feasible $\bm R$ and $\tilde{\bm F}$, the algorithm cyclically updates the beamformers, the UPA pose, and the element boresights by the WMMSE method, Algorithm~\ref{alg:pose_so3}, and Algorithm~\ref{alg:ori_only}, respectively. After each block update, the channels and the WSR objective are recomputed. Since each accepted block update preserves feasibility and does not decrease the WSR, the objective sequence generated by Algorithm~\ref{alg:overall_ao} is monotonically nondecreasing. Moreover, under finite transmit power and bounded channel gains, the WSR is upper bounded, and thus the objective sequence converges.

\begin{remark}
    \label{rem:ao_special_cases}
    The proposed AO framework also covers several special cases. If the element boresights $\{\tilde{\bm f}_m\}$ are fixed, it reduces to a pose-only design that alternates between WMMSE beamforming and UPA pose optimization. If the UPA pose $\bm R$ is fixed, it reduces to a boresight-only design that alternates between WMMSE beamforming and element boresight optimization. Moreover, as discussed in Remark~\ref{rem:channel_geometric_roles}, when $p=0$, the element boresights provide no angle-dependent gain adjustment within the visible front half-space, while the UPA pose can still reshape the effective channels through the element positions and relative phases.
\end{remark}

\begin{algorithm}[t]\small
\caption{Overall AO-Based Algorithm}
\label{alg:overall_ao}
\begin{algorithmic}[1]
\REQUIRE Feasible UPA pose $\bm R^{(0)}$, feasible element boresights $\tilde{\bm F}^{(0)}$, and algorithmic parameters $T_{\mathrm{AO}}$ and $\varepsilon_{\rm AO}$.
\STATE Initialize $t\leftarrow0$, $\bm R\leftarrow\bm R^{(0)}$, and
$\tilde{\bm F}\leftarrow\tilde{\bm F}^{(0)}$.
\STATE Compute the initial WSR objective $\mathcal R^{(0)}$.
\REPEAT
    \STATE Update $\{\bm w_k\}$ by solving the WMMSE problem
    in \eqref{prob:wmmse}.
    \STATE Update the UPA pose $\bm R$ by Algorithm~\ref{alg:pose_so3}.
    \STATE Update the element boresights $\tilde{\bm F}$ by
    Algorithm~\ref{alg:ori_only}.
    \STATE Compute the WSR objective
    $\mathcal R^{(t+1)}$.
    \STATE $t\leftarrow t+1$.
    \UNTIL{$t\ge T_{\mathrm{AO}}$ or $|\mathcal R^{(t)}-\mathcal R^{(t-1)}| < \varepsilon_{\rm AO} $.}
\STATE \textbf{return} $\{\bm w_k\}$, $\bm R$, and $\tilde{\bm F}$.
\end{algorithmic}
\end{algorithm}

Computational complexity: In the beamforming block, each WMMSE iteration is dominated by computing the user-beam inner products, forming the weighted channel covariance matrix, and solving a common $M$-dimensional linear system, with complexity $\mathcal O(K^2M+KM^2+M^3)$. In the UPA-pose block, only a three-dimensional rotation increment is optimized, and the main cost comes from computing the pose gradient and evaluating the WSR, with complexity about $\mathcal O(K^2M+KM(Q+1))$ per pose iteration. In the element-boresight block, all $M$ boresight vectors are updated. By exploiting the sparsity of $\bm J_{k,m}^{\rm ori}$, computing the element-wise gradients and the closed-form FW-type oracle has complexity about $\mathcal O(K^2M+KM(Q+1))$ per boresight iteration, with a larger constant due to the element-wise variables. Let $T_{\rm AO}$, $T_{\rm W}$, $T_{\rm R}$, and $T_{\rm F}$ denote the numbers of outer AO iterations, WMMSE iterations, UPA-pose iterations, and element-boresight iterations, respectively. The overall complexity can be approximately characterized as $\mathcal O\big(T_{\rm AO}[T_{\rm W}(K^2M+KM^2+M^3) +(T_{\rm R}+T_{\rm F})(K^2M+KM(Q+1))]\big)$.

\section{Simulation Results}
In this section, numerical examples are provided to validate the effectiveness of the proposed algorithms and the geometric interpretation. Unless otherwise stated, the carrier frequency is $f_c=2.4$~GHz, the inter-element spacing is $d=\lambda/2$, the receiver noise power is $-80$~dBm, and equal user weights $\omega_k=1$ are adopted. The BS is located at height $h_{\mathrm{BS}}=50$~m, while the users are located on the horizontal plane $z=h_{\mathrm{UE}}=1.5$~m. We first consider a controlled two-user LoS setup to validate the geometric insights, and then evaluate the proposed AO-based design in general multiuser systems.

\subsection{Validation of Geometric Insights in Section~\ref{sec:insight_pose_orientation} }
\begin{figure}[t]
  \centering
  \subfloat[Channel correlation]{
    \includegraphics[width=0.88\linewidth]{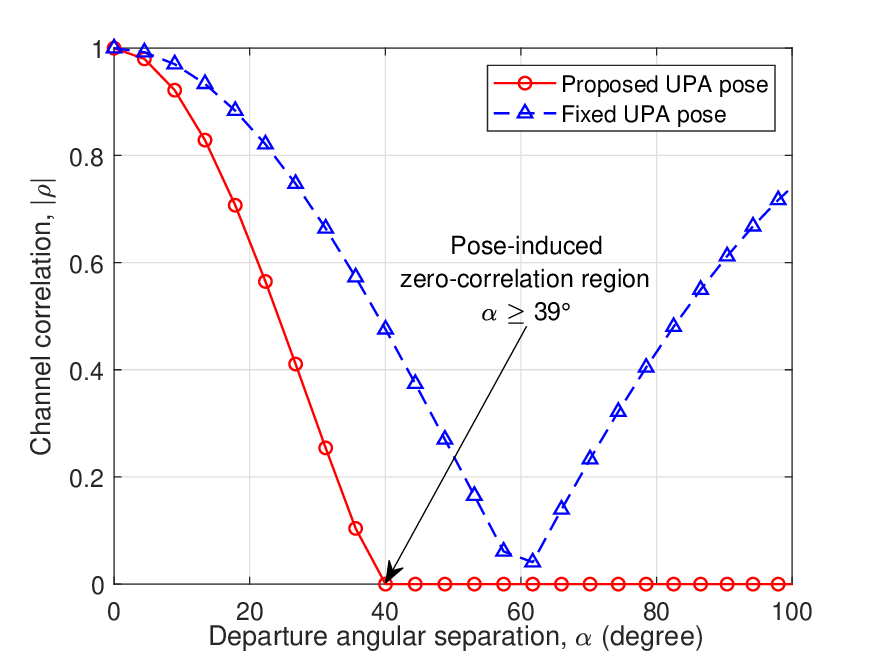}
    \label{fig:uniform1}
  } \\
  \subfloat[Weighted sum rate]{
    \includegraphics[width=0.88\linewidth]{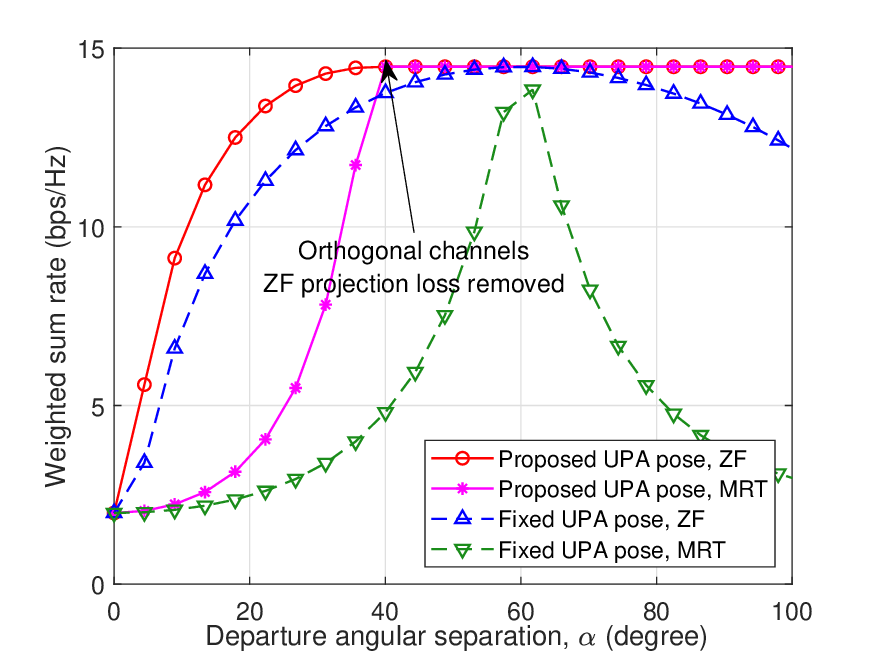}
    \label{fig:fibo1}
  }
  \caption{Pose-induced spatial separability under isotropic elements with $p=0$.}
  \label{fig:sim_two_user_isotropic}
\end{figure}

We first validate the geometric interpretation developed in Section~\ref{sec:insight_pose_orientation} under both isotropic and directional elements. Two users are symmetrically placed on a horizontal arc with radius $R_{\mathrm{arc}}=100$~m. By varying their arc separation, the horizontal axis represents the resulting three-dimensional departure angular separation $\alpha=\arccos(\bm d_1^T\bm d_2)$ observed from the BS. Unless otherwise specified, a $2\times3$ UPA and $P_{\max}=15$~dBm are used, and a LoS-only channel is adopted to isolate the far-field geometric mechanism. For comparison, the proposed curves follow the constructive solution in Section~\ref{sec:insight_pose_orientation}, whereas the fixed pose/boresight benchmark keeps the UPA normal pointing toward the positive $y$-direction with all element boresights aligned with the local array normal.

Fig.~\ref{fig:sim_two_user_isotropic} validates the isotropic-element insight in Section~\ref{subsec:p0_pose_orthogonalization}. When $p=0$, the element gain is angle-independent within the visible front half-space, and element-boresight steering provides no effective gain-shaping freedom. As shown in Fig.~\ref{fig:sim_two_user_isotropic}(a), for the considered $2\times3$ UPA with $d=\lambda/2$, the proposed UPA-pose design achieves a zero-correlation region once the angular separation exceeds about $39^\circ$. This agrees with Proposition~\ref{prop:pose_phase_orthogonalization}, since \eqref{eq:p0_pose_orth_condition_revised} gives $\alpha_{\rm th}=2\sin^{-1}(1/3)\approx38.9^\circ$. In contrast, the fixed UPA pose only exhibits an isolated correlation null around $60^\circ$, after which the correlation rises again due to its fixed array-factor pattern. This confirms that UPA-pose adaptation can project the user direction difference onto a favorable aperture dimension and achieve channel orthogonalization at a smaller angular separation. Fig.~\ref{fig:sim_two_user_isotropic}(b) shows the corresponding WSR under ZF and MRT beamforming. Consistent with the ZF decomposition in \eqref{eq:zf_wsr_decomposed_revised}, the proposed UPA-pose design improves the ZF rate in the small-to-moderate separation region by reducing channel correlation and hence the ZF projection loss. When $|\rho|$ approaches zero, the separability factor $1-|\rho|^2$ approaches one, and the ZF rate is no longer limited by inter-user channel alignment. The MRT curves further confirm this spatial-separability effect. When the UPA pose yields nearly orthogonal user channels, the MRT performance approaches that of ZF. In the limiting case of $\rho=0$, the normalized ZF directions coincide with the MRT directions and the two schemes achieve the same rates. Overall, under isotropic elements, UPA-pose adaptation turns the array response into a controllable spatial-separability DoF, enabling zero correlation beyond the first-null threshold and thereby removing the ZF projection loss.

\begin{figure}[t]
	\begin{center}
		\includegraphics[width=0.44\textwidth]{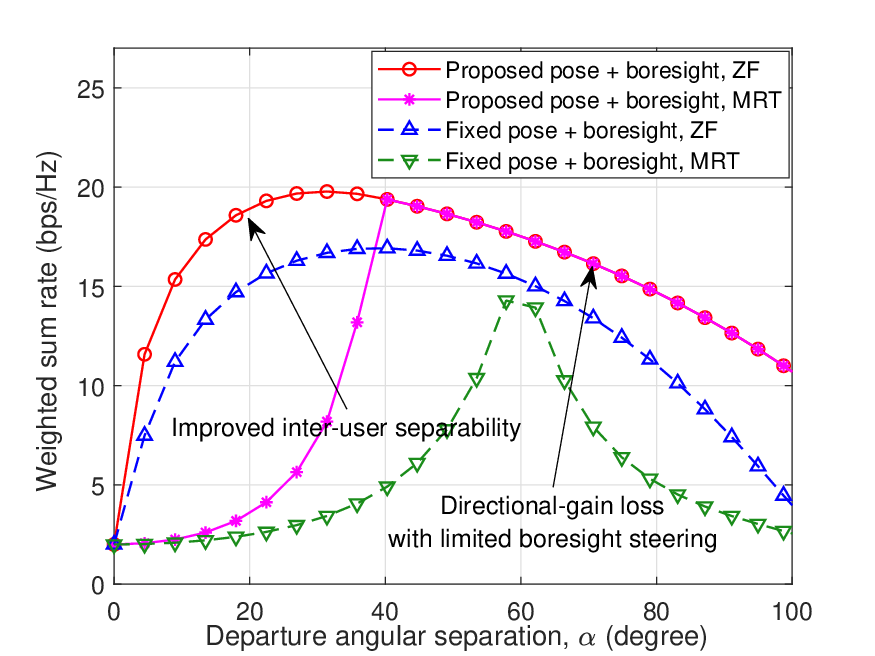}
		\caption{Pose--boresight coupling under directional elements with $p=4$.}
		\label{fig:twouserDir}
	\end{center}
\end{figure}

Fig.~\ref{fig:twouserDir} validates the pose--boresight coupling insight developed in Section~\ref{subsec:pose_orientation_roles} under directional elements with $p=4$ and $\theta_{\max}=\pi/4$, using the same topology as in Fig.~\ref{fig:sim_two_user_isotropic}. As $\alpha$ initially increases, the proposed pose-and-boresight design improves inter-user spatial separability by reshaping the array-induced phase progression while maintaining favorable directional-gain alignment, leading to a rapid WSR increase. Unlike the isotropic case, however, increasing the user separation is not always beneficial under directional elements. Beyond a moderate separation, the two users become more difficult to cover simultaneously due to the finite element-boresight steering range, and the WSR starts to decrease. This nonmonotonic trend reflects the key tradeoff between improving inter-user spatial separability and preserving sufficient directional gain. The MRT curves exhibit a similar nonmonotonic trend. For the proposed pose-and-boresight design, the MRT and ZF curves nearly merge when the user channels become sufficiently separable, because MRT suffers little residual inter-user interference and ZF incurs little projection loss. Overall, under directional elements, the UPA pose mainly shapes the array-induced phase progression and coarse directional coverage, whereas the element boresights refine the direction-dependent gains, motivating the joint design of UPA pose and element boresights.

\subsection{Validation of Proposed Algorithm in Section~\ref{sec:AO_solution}}
For the general multiuser simulations, unless otherwise stated, the BS is equipped with a $3\times3$ UPA whose initial normal points toward the positive $y$-direction in the GCS, and serves $K=6$ single-antenna users randomly dropped within the corresponding front-side semicircular service region of radius $R_{\mathrm{cell}}=100$ m. Each antenna element follows the directional pattern with the default factor $p=4$. We consider the following benchmarks in addition to the joint design:
\begin{itemize}
    \item \textbf{Pose only:} The UPA pose and beamformers are optimized, while all element boresights are fixed along the local UPA normal, as discussed in Remark~\ref{rem:ao_special_cases}.
    \item \textbf{Boresight only:} The element boresights and beamformers are optimized, while the UPA pose is fixed at its initial pose, as discussed in Remark~\ref{rem:ao_special_cases}.
    \item \textbf{Isotropic antenna:} The directional pattern is replaced by the front-side isotropic pattern by setting $p=0$, and only the beamformers are optimized under the fixed initial UPA pose and element boresights.
    \item \textbf{Fixed pose and boresights:} Both the UPA pose and element boresights are fixed at their initial values, and only the beamformers are optimized.
\end{itemize}

\begin{figure}[t]
	\begin{center}
		\includegraphics[width=0.44\textwidth]{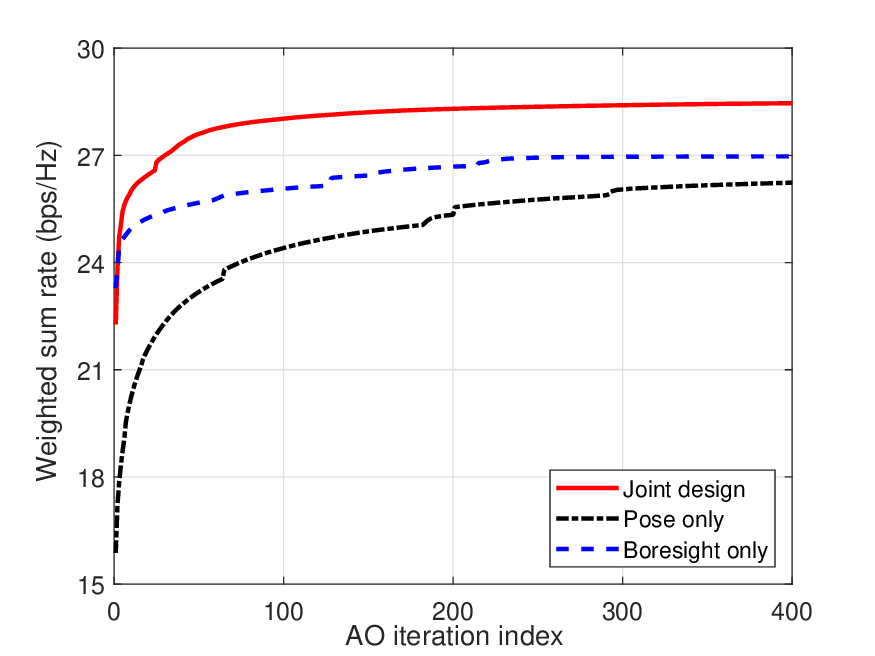}
		\caption{Convergence behavior of the AO-based algorithm.}
		\label{fig1}
	\end{center}
\end{figure}

Fig.~\ref{fig1} illustrates the convergence behavior of the three AO-based schemes, namely the joint design, pose-only design, and boresight-only design, where $P_{\max}=10$~dBm and $\theta_{\max}=\pi/4$. It is observed that all three schemes exhibit nondecreasing WSR over the AO iterations, which is consistent with the monotonicity of the block-wise updates established in Section \ref{sec:AO_solution}. The joint design reaches a stable WSR level within a few iterations, while the pose-only and boresight-only schemes also converge gradually as their respective geometric variables are updated together with the beamformers.

\begin{figure}[t]
	\begin{center}
		\includegraphics[width=0.44\textwidth]{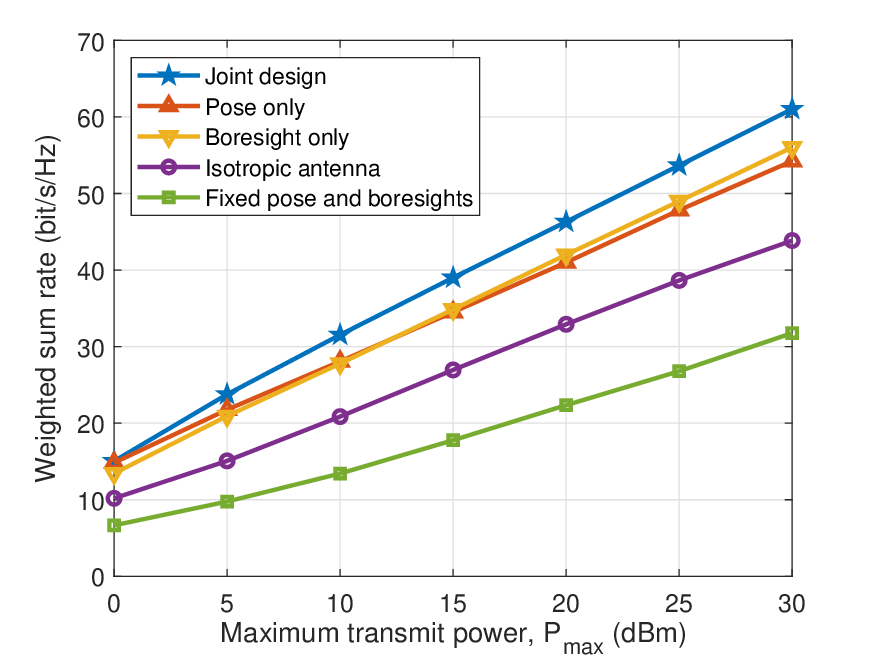}
		\caption{WSR vs. the transmit power under directional elements with $p=4$.}
		\label{fig:power_sweep}
	\end{center}
\end{figure}

\begin{figure}[t]
	\begin{center}
		\includegraphics[width=0.44\textwidth]{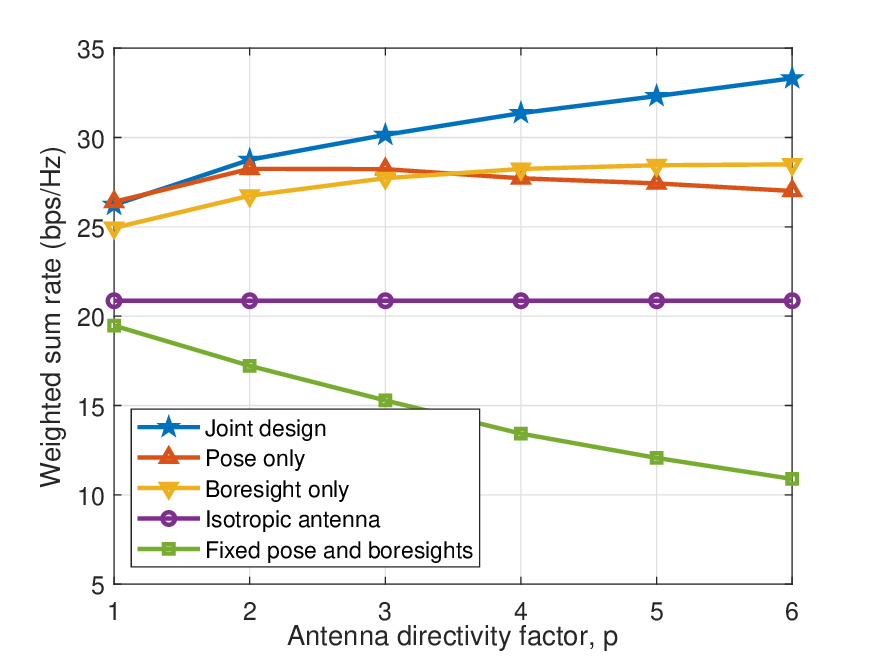}
		\caption{WSR vs. the antenna directivity factor.}
		\label{fig:p_sweep}
	\end{center}
\end{figure}

Fig.~\ref{fig:power_sweep} compares the WSR performance versus the maximum transmit power $P_{\max}$, where $\theta_{\max}=\pi/4$ and $p=4$. As expected, all schemes achieve higher WSR as $P_{\max}$ increases. The proposed joint design consistently outperforms all benchmarks over the entire power range, demonstrating the benefit of jointly exploiting UPA-wise pose adjustment and element-wise boresight steering. Both the pose-only and boresight-only designs provide substantial gains over the fixed-geometry baseline, confirming that each geometric DoF can improve the effective channels. Nevertheless, they remain inferior to the joint design because the UPA pose and element boresights are coupled: the pose determines both the array-induced phase difference and the local user directions available for boresight alignment, while the boresights further refine the direction-dependent gains. Notably, the fixed pose-and-boresight scheme performs even worse than the isotropic-element benchmark, especially at medium and high transmit powers. This shows that antenna directivity alone is not necessarily beneficial when the radiation directions are not properly matched to the user geometry. As $P_{\max}$ increases, the performance gap between the joint design and the benchmarks becomes more pronounced, since the system becomes less noise-limited and more sensitive to channel strength, spatial separability, and multiuser interference.

\begin{figure}[t]
	\begin{center}
		\includegraphics[width=0.44\textwidth]{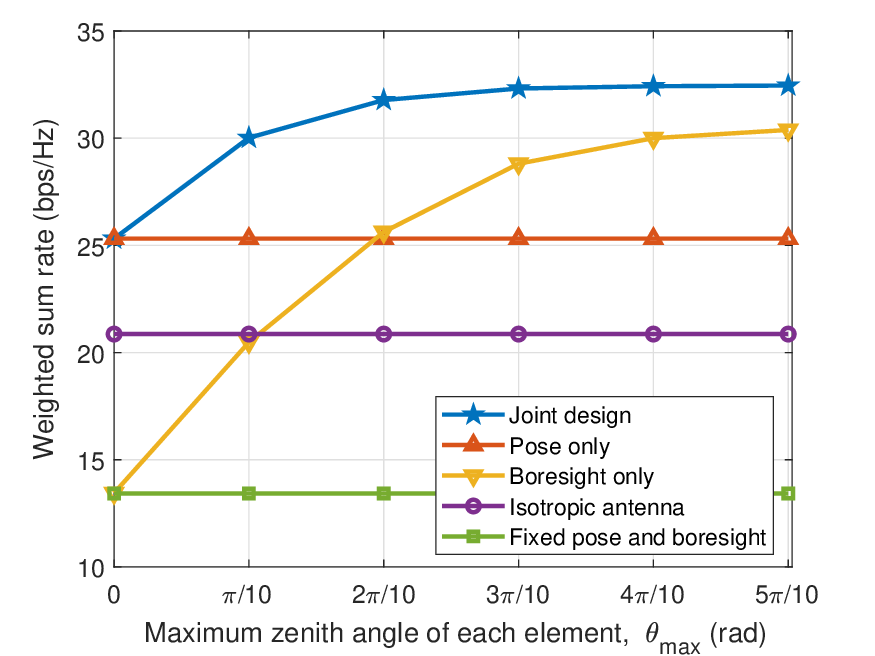}
		\caption{WSR vs. the maximum zenith angle.}
		\label{fig:theta_sweep}
	\end{center}
\end{figure}

\begin{figure}[t]
	\begin{center}
		\includegraphics[width=0.44\textwidth]{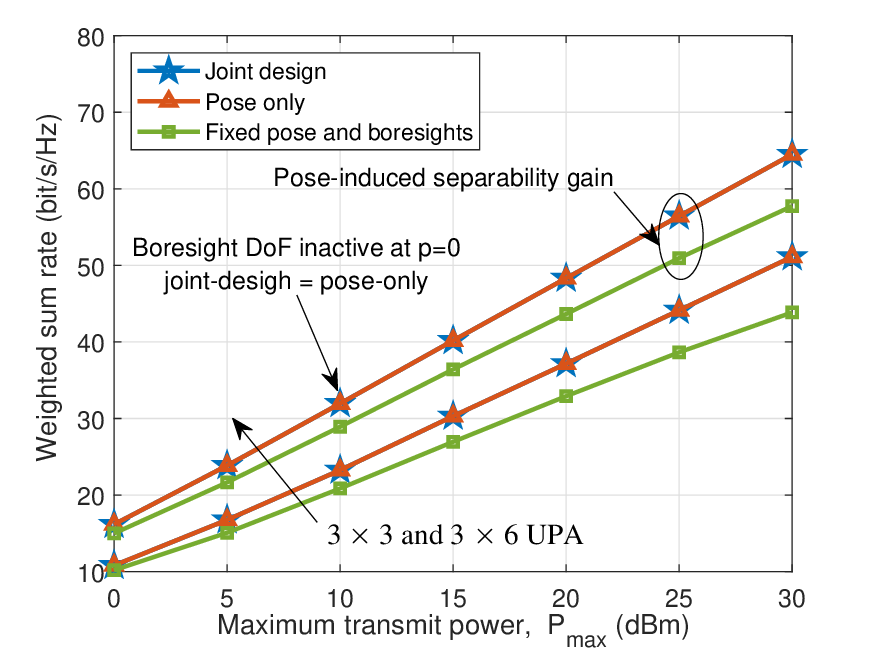}
		\caption{WSR vs. the transmit power under isotropic elements with $p=0$.}
		\label{fig:p0_pose_gain}
	\end{center}
\end{figure}

Fig.~\ref{fig:p_sweep} shows the WSR performance versus the element directivity factor $p$, where $P_{\max}=10$~dBm and $\theta_{\max}=\pi/4$. As $p$ increases, the element radiation pattern becomes narrower and more directive, providing a higher peak gain but also requiring more accurate directional alignment. The joint design achieves the highest WSR and benefits steadily from the increased directivity. The boresight-only design also gains from larger $p$, but its improvement is limited by the fixed UPA pose, which cannot reshape the array-induced phase geometry or the local user directions. In contrast, the pose-only design improves for small-to-moderate $p$ and then saturates or slightly degrades, because a single global UPA pose becomes insufficient to align narrow element beams toward multiple users. The fixed-pose-and-boresight benchmark degrades as $p$ increases, indicating that stronger directivity can hurt performance when the radiation directions are not properly matched to the user distribution. These results show that increased directivity is beneficial only when accompanied by proper pose and boresight adaptation.

Fig.~\ref{fig:theta_sweep} shows the WSR performance versus the maximum zenith steering angle $\theta_{\max}$, where $P_{\max}=10$~dBm and $p=4$. When $\theta_{\max}=0$, all element boresights are constrained to the local array normal, and hence the boresight-only scheme reduces to the fixed-pose-and-boresight benchmark, while the joint design reduces to the pose-only design. The pose-only, isotropic-elements, and fixed-pose-and-boresight schemes remain almost unchanged as $\theta_{\max}$ varies, since they do not exploit element-boresight steering. As $\theta_{\max}$ increases, the boresight-only scheme improves rapidly because a larger steering range enables more effective directional-gain alignment. However, for small steering ranges, the pose-only design can outperform the boresight-only design because UPA-pose optimization provides global array reorientation, coarse directional coverage, and array-phase shaping, whereas boresight-only steering remains limited by the fixed UPA pose. The joint design achieves the highest WSR by combining coarse pose-based spatial/directional adjustment with local boresight-based gain refinement.

Fig.~\ref{fig:p0_pose_gain} isolates the role of UPA-pose adaptation under isotropic elements by setting $p=0$, where the angle-dependent directional gain is removed. The maximum zenith angle is set to $\theta_{\max}=\pi/4$, and two UPA sizes, i.e., $3\times3$ and $3\times6$, are considered. Since the front-side element gain is independent of the boresight direction, element-boresight steering provides no gain-shaping DoF in this case; hence, the joint design and the pose-only design overlap. Nevertheless, both schemes outperform the fixed-pose-and-boresight benchmark, showing that UPA-pose optimization can improve the WSR even without element-level directivity. Consistent with the insight of Proposition~\ref{prop:pose_phase_orthogonalization}, this improvement mainly comes from reshaping the array-induced phase responses and enhancing inter-user spatial separability. Moreover, the $3\times6$ UPA achieves a higher WSR than the $3\times3$ UPA due to its larger aperture and more antenna elements, while the pose-induced separability gain remains visible for both array sizes.

\section{Conclusion}
This paper investigated multiuser downlink transmission enabled by a rotatable UPA with steerable element boresights. By jointly exploiting array-level pose control and element-level boresight steering, the proposed architecture reshapes both array-induced phase responses and direction-dependent channel gains. We formulated a WSR maximization problem over the transmit beamformers, UPA pose, and element boresights under practical visibility and steering constraints. To clarify the underlying mechanism, a two-user ZF analysis decomposed the rate into channel-strength and spatial-separability terms. For isotropic elements, boresight steering provides no effective gain-shaping freedom, while UPA rotation improves spatial separability by reducing inter-user channel correlation. For directional elements, a separability--gain tradeoff arises: the UPA pose controls array-induced phase progression and coarse directional coverage, whereas the element boresights refine direction-dependent gains. For the general multiuser transmission, we developed an AO-based algorithm that alternately updates the beamformers, UPA pose, and element boresights while preserving feasibility. Simulation results substantiated the proposed phase-and-gain channel-shaping principle. They showed that stronger element directivity is beneficial only when properly matched by pose and boresight adaptation, and that UPA-pose control is especially valuable under limited element steering. These results demonstrate the potential of pose--boresight reconfigurability as an effective geometry-domain approach for adaptive multiuser transmission.

\bibliographystyle{IEEEtran}

\bibliography{bibtex}

@ARTICLE{10555049,
  author={Wu, Qingqing and others},
  journal={Proc. IEEE}, 
  title={Intelligent surfaces empowered wireless network: Recent advances and the road to {6G}}, 
  year={2024},
  month={Jul.},
  volume={112},
  number={7},
  pages={724-763},
  doi={10.1109/JPROC.2024.3397910}}

@ARTICLE{10379539,
  author={Wang, Zhe and others},
  journal={IEEE Commun. Surveys Tut.}, 
  title={A tutorial on extremely large-scale {MIMO} for {6G}: Fundamentals, signal processing, and applications}, 
  year={2024},
  volume={26},
  number={3},
  pages={1560-1605},
  doi={10.1109/COMST.2023.3349276}}

@ARTICLE{10496996,
  author={Lu, Haiquan and others},
  journal={IEEE Commun. Surveys Tut.}, 
  title={A tutorial on near-field {XL-MIMO} communications toward {6G}}, 
  year={2024},
  volume={26},
  number={4},
  pages={2213-2257},
  doi={10.1109/COMST.2024.3387749}}

@ARTICLE{11053128,
  author={Galappaththige, Diluka and Mohammadi, Mohammadali and Ngo, Hien Quoc and Matthaiou, Michail and Tellambura, Chintha},
  journal={IEEE Trans. Commun.}, 
  title={Cell-Free Full-Duplex Communication—An Overview}, 
  year={2025},
  volume={73},
  number={11},
  pages={10468-10494},
  doi={10.1109/TCOMM.2025.3583637}}

@ARTICLE{8910627,
  author={Wu, Qingqing and Zhang, Rui},
  journal={IEEE Commun. Mag.}, 
  title={Towards smart and reconfigurable environment: Intelligent reflecting surface aided wireless network}, 
  year={2020},
  month={Jan.},
  volume={58},
  number={1},
  pages={106-112},
  doi={10.1109/MCOM.001.1900107}}

@ARTICLE{9998527,
  author={Zhang, Zijian and others},
  journal={IEEE Trans. Commun.}, 
  title={Active {RIS} vs. passive {RIS}: Which will prevail in {6G}?}, 
  year={2023},
  month={Mar.},
  volume={71},
  number={3},
  pages={1707-1725},
  doi={10.1109/TCOMM.2022.3231893}}

@ARTICLE{10795216,
  author={Munagala, Rakesh and others},
  journal={IEEE Trans. Commun.}, 
  title={{IRS}-aided multi-cell massive {MIMO} systems with impairments: {SE} analysis and optimization}, 
  year={2025},
  month={Jul.},
  volume={73},
  number={7},
  pages={5295-5312},
  doi={10.1109/TCOMM.2024.3516512}}

@ARTICLE{11007277,
  author={Wu, Qingqing and others},
  journal={IEEE Wireless Commun.}, 
  title={Intelligent Reflecting Surfaces for Wireless Networks: Deployment Architectures, Key Solutions, and Field Trials}, 
  year={2025},
  month={Dec.},
  volume={32},
  number={6},
  pages={141-148},
  doi={10.1109/MWC.001.250002}}

@ARTICLE{10753482,
  author={New, Wee Kiat and others},
  journal={IEEE Commun. Surveys Tut.}, 
  title={A Tutorial on Fluid Antenna System for {6G} Networks: Encompassing Communication Theory, Optimization Methods and Hardware Designs}, 
  year={2025},
  volume={27},
  number={4},
  pages={2325-2377},
  doi={10.1109/COMST.2024.3498855}}

@ARTICLE{10906511,
  author={Zhu, Lipeng and others},
  journal={IEEE Commun. Surveys Tut.}, 
  title={A Tutorial on Movable Antennas for Wireless Networks}, 
  year={2026},
  volume={28},
  number={},
  pages={3002-3054},
  doi={10.1109/COMST.2025.3546373}}

@ARTICLE{10286328,
  author={Zhu, Lipeng and Ma, Wenyan and Zhang, Rui},
  journal={IEEE Commun. Mag.}, 
  title={Movable antennas for wireless communication: Opportunities and challenges}, 
  year={2024},
  month={Jun.},
  volume={62},
  number={6},
  pages={114-120},
  doi={10.1109/MCOM.001.2300212}}

@ARTICLE{10243545,
  author={Ma, Wenyan and Zhu, Lipeng and Zhang, Rui},
  journal={IEEE Trans. Wireless Commun.}, 
  title={{MIMO} capacity characterization for movable antenna systems}, 
  year={2024},
  month={Apr.},
  volume={23},
  number={4},
  pages={3392-3407},
  doi={10.1109/TWC.2023.3307696}}

@ARTICLE{10354003,
  author={Zhu, Lipeng and Ma, Wenyan and Ning, Boyu and Zhang, Rui},
  journal={IEEE Trans. Wireless Commun.}, 
  title={Movable-Antenna Enhanced Multiuser Communication via Antenna Position Optimization}, 
  year={2024},
  volume={23},
  number={7},
  pages={7214-7229},
  doi={10.1109/TWC.2023.3338626}}

@ARTICLE{11049889,
  author={Tang, Boyi and others},
  journal={IEEE Trans. Wireless Commun.}, 
  title={Capacity Maximization of Uplink With Fluid Antenna System at Both Ends}, 
  year={2025},
  volume={24},
  number={12},
  pages={10580-10593},
  doi={10.1109/TWC.2025.3580506}}

@ARTICLE{10794752,
  author={Xu, Hao and others},
  journal={IEEE Trans. Commun.}, 
  title={Capacity Maximization for FAS-Assisted Multiple Access Channels}, 
  year={2025},
  volume={73},
  number={7},
  pages={4713-4731},
  doi={10.1109/TCOMM.2024.3516499}}

@ARTICLE{11039166,
  author={Ghadi, Farshad Rostami and others},
  journal={IEEE Trans. Wireless Commun.}, 
  title={Fluid Antenna Multiple Access With Simultaneous Non-Unique Decoding in Strong Interference Channel}, 
  year={2025},
  volume={24},
  number={12},
  pages={10183-10195},
  doi={10.1109/TWC.2025.3578280}}

@ARTICLE{11222668,
  author={Zheng, Beixiong and others},
  journal={IEEE Wireless Commun.}, 
  title={Rotatable antenna enabled wireless communication and sensing: Opportunities and challenges}, 
  year={2025},
  note={early access, doi: \url{10.1109/MWC.2025.3611919}},
  doi={10.1109/MWC.2025.3611919}}

@ARTICLE{11427014,
  author={Zheng, Beixiong and Wu, Qingjie and Ma, Tiantian and Zhang, Rui},
  journal={IEEE Trans. Commun.}, 
  title={Rotatable Antenna-Enabled Wireless Communication: Modeling and Optimization}, 
  year={2026},
  volume={74},
  number={},
  pages={6825-6842},
  doi={10.1109/TCOMM.2026.3672230}}

@ARTICLE{11489290,
  author={Peng, Xingxiang and Wu, Qingqing and Zheng, Ziyuan and Chen, Wen and Zhu, Yanze and Gao, Ying},
  journal={IEEE Trans. Wireless Commun.}, 
  title={Rotatable Antenna Enabled Spectrum Sharing: Joint Antenna Orientation and Beamforming Design}, 
  year={2026},
  month={Apr.},
  volume={25},
  number={},
  pages={15660-15674},
  doi={10.1109/TWC.2026.3683870}}

@ARTICLE{11520277,
  author={Zheng, Ziyuan and others},
  journal={IEEE J. Sel. Topics Sig. Process.}, 
  title={Low-Altitude {ISAC} with Rotatable Active and Passive Arrays}, 
  year={2026},
  note={early access, doi: \url{10.1109/JSTSP.2026.3693227}},
  doi={10.1109/JSTSP.2026.3693227}}

@ARTICLE{11534579,
  author={Zhang, Xuzhong and Xiang, Lin and Wang, Jiaheng and Gao, Xiqi},
  journal={IEEE Trans. Veh. Technol.}, 
  title={{UAV}-Aided Mmwave Massive {MIMO} Broadcasting With Rotatable Antenna Array}, 
  year={2026},
  note={early access, doi: \url{10.1109/TVT.2026.3696498}},
  doi={10.1109/TVT.2026.3696498}}

@ARTICLE{11206404,
  author={Zhang, Xuzhong and others},
  journal={IEEE Trans. Commun.}, 
  title={Rotatable Antenna Array Enabled {UAV} mmWave Massive {MIMO} Communication}, 
  year={2026},
  volume={74},
  number={},
  pages={1219-1236},
  doi={10.1109/TCOMM.2025.3622962}}

@ARTICLE{11134688,
  author={Xiong, Xue and others},
  journal={IEEE Wireless Commun. Lett.}, 
  title={Efficient channel estimation for rotatable antenna-enabled wireless communication}, 
  year={2025},
  month={Nov.},
  volume={14},
  number={11},
  pages={3719-3723},
  doi={10.1109/LWC.2025.3601979}}

@ARTICLE{11520842,
  author={Zhang, Guoying and others},
  journal={IEEE Wireless Commun. Lett.}, 
  title={Joint Antenna Rotation and {IRS} Beamforming for Multi-User Uplink Communications}, 
  year={2026},
  volume={15},
  number={},
  pages={3219-3223},
  doi={10.1109/LWC.2026.3693733}}

@ARTICLE{arxiv1,
  author={Peng, Xingxiang and others},
  journal={arXiv preprint: 2601.16543}, 
  title={Cell-Free {MIMO} with Rotatable Antennas: When Macro-Diversity Meets Antenna Directivity}, 
  year={2026},
  note={doi: \url{https://arxiv.org/abs/2601.16543}},
  number={}}

@book{Ant2016,
  title     = {Antenna theory: Analysis and design},
  author    = {C. A. Balanis},
  year      = {2016},
  publisher = {John Wiley \& Sons}}

@ARTICLE{5756489,
  author={Shi, Qingjiang and Razaviyayn, Meisam and Luo, Zhi-Quan and He, Chen},
  journal={IEEE Trans. Sig. Process.}, 
  title={An iteratively weighted {MMSE} approach to distributed sum-utility maximization for a {MIMO} interfering broadcast channel}, 
  year={2011},
  month={Sept.},
  volume={59},
  number={9},
  pages={4331-4340},
  doi={10.1109/TSP.2011.2147784}}

\end{document}